\journal{Journal of The Franklin Institute}
\begin{document}

\begin{frontmatter}



\title{
Composite Adaptive Control for Bilateral Teleoperation Systems without Persistency of Excitation\tnoteref{mytitlenote}}


\author[YL]{Yuling Li}\ead{lyl8ustb@gmail.com}
\author[YL]{Yixin Yin}\ead{yyx@ies.ustb.edu.cn}
\author[YL]{Sen Zhang}\ead{zhangsen@ustb.edu.cn}
\author[YL]{Jie Dong\corref{cor1}}\ead{dongjie@ies.ustb.edu.cn}
\author[RJ]{Rolf Johansson}\ead{Rolf.Johansson@control.lth.se}

\cortext[cor1]{Corresponding author}

\address[YL]{School of Automation and Electrical Engineering,
University of Science and Technology Beijing, Beijing, 100083, P.~R.~China}
\address[RJ]{ Department of Automatic Control, Lund University, P.O. Box 118, 22100 Lund, Sweden. }

\newtheorem{assumption}{Assumption}
\newtheorem{definition}[assumption]{Definition}
\newtheorem{theorem}[assumption]{Theorem}
\newtheorem{remark}[assumption]{Remark}
\newtheorem{proposition}[assumption]{Proposition}

\begin{abstract}
Composite adaptive control schemes, which use both the system tracking errors and the prediction error to drive the update laws, have become widespread in achieving an improvement of system performance. However, a strong persistent-excitation (PE) condition should be satisfied to guarantee the parameter convergence. This paper proposes a novel composite adaptive control to guarantee parameter convergence without PE condition for nonlinear teleoperation systems with dynamic uncertainties and time-varying communication delays. The stability criteria of the closed-loop teleoperation system are given in terms of linear matrix inequalities. New tracking performance measures are proposed to evaluate the position tracking between the master and the slave. Simulation studies are given to show the effectiveness of the proposed method.
\end{abstract}

\begin{keyword}
Teleoperation \sep composite adaptive control \sep  time-varying delays \sep nonlinear systems \sep stability


\end{keyword}

\end{frontmatter}


\section{Introduction}
Bilateral teleoperation systems are  one of the most well-known robotic systems  which extend human operators' intelligence and manipulation skills to the remote environments. A typical single-master-single-slave teleoperation system composes of five parts: a human operator, a master manipulator, communication channel, a slave manipulator and a task environment. The master is directly operated by a human operator to manipulate the slave in the task environment, and the signals (positions, velocities or interaction forces) from the slave are sent back to the master to improve the manipulation performance. Recent years have witnessed considerable advances in the control studies of teleoperation systems, owing to their broad engineering applications in telesurgery, space exploration, nuclear operation, undersea exploration, and so forth.

In practice of teleoperation, there usually exist modeling uncertainties caused by inaccurate parameters of links, unknown load, etc., in the master and slave robots. Besides, the slave robot often interacts with unknown environments, which also leads to uncertainties in the robot manipulators as well.  Adaptive control is an effective technique for handling structured uncertainties and has obtained widespread applications in manipulators, where some recent results can be found in \cite{Islam2011adaptive, li2014decentralised, dehghan2015adaptive, Li2017adaptive}.
 Classical adaptive control usually uses tracking error feedback to update the adaptive estimates.
The use of the tracking error is motivated by the need to cancel cross-terms in the closed-loop tracking error system within a Lyapunov-based analysis. However, these methods can only guarantee the state tracking of the system, while the parameter convergence to their true values is kind of problematic. It is well known that the convergence of the parameters to their true values can improve system performance with accurate online identification, exponential tracking and robust adaption without parameter drift. However, these features are not guaranteed unless a condition of persistent excitation (PE) is satisfied \cite{Pan2016TAC}. It is well known that the PE condition is very stringent and often infeasible in practical control systems \cite{pan2016ccdc}.

In another line, due to the nature of the long-distance data transmission, communication time delays should not be neglected in the control of teleoperation systems \cite{liu2017stability, liu2016generalized, Liu2017comparison}.  The master-slave synchronization \cite{yan2017}  and stability analysis of teleoperation systems with various kinds of time delays, such as constant delays \cite{Spong1989_classical}, time-varying delays\cite{Hua2010_LMI,  li2015guaranteed, yang2015_syn, SARRAS2014_JFI, zhai2015_IJRNC, zhai2016_inputsaturation}, or stochastic delays\cite{kang2013_stodelay, li2013_stodelay, li2016_unknownconstraints}, and so on, have been hot topics in the study of teleoperation systems in recent years.  To handle parameter uncertainties and communication delays in a unified framework, classical adaptive control with sliding mode is introduced into the control of teleoperation systems.
For example, Chopra et. al. \cite{chopra2008brief} proposed an adaptive controller for teleoperation systems with constant time delays and without using scattering transformation. Nu\~no et. al \cite{Nuno2010improvedsyn} pointed out the limitation of Chopra's results in  \cite{chopra2008brief} and proposed a more general adaptive controller for nonlinear teleoperation systems  with constant delays, and further with time-varying delays \cite{SARRAS2014_JFI}. However, these classical adaptive control schemes only achieve asymptotic convergence of master-slave tracking errors, while parameter convergence is seldom considered. With the motivation of using more information to update the parameter estimates to obtain an improved tracking performance, composite adaptive control has been used in teleoperation systems \cite{kim2013apt, chen2016adaptive}.  However, in the works \cite{kim2013apt, chen2016adaptive}, the communication delays were assumed to be constant, which is unrealistic in applications. Moreover, the PE condition or a relaxed sufficient excitation (SE) condition was still required for parameter convergence in these works.

Motivated by the above-mentioned facts, in this paper, a new composite adaptive controller  is designed for teleoperation systems with time-varying delays. This paper has some unique features and key contributions over the existing works in the following ways. First, contrary to the existing works which guarantee the boundedness of the parameter estimation errors \cite{chopra2008brief, Nuno2010improvedsyn, SARRAS2014_JFI}, this paper achieves convergence of parameters to their true values when the communication delays are time-varying, which then gives rise to an improvement of system performance. Second, a new prediction error is designed to guarantee the parameter convergence, and neither PE or SE condition is required, thus the proposed control scheme is more practical in real applications.  Third, the derivatives of the time-varying communication delays are not needed in the controller formulation, which thus makes the controller easier to implement since the information of the delays' derivatives is not easy to obtain in real applications.

The arrangement of  this paper is as follows. In Section \ref{sec:pro_formulation} the system modeling and some preliminaries are given. In Section \ref{sec:controldesign}, the adaptive control with parameter convergence is proposed.  Section \ref{sec:stability} summarizes the stability results of the closed-loop system, while new tracking measures are proposed in Section \ref{sec:trackingindex} to evaluate the delayed tracking performance.
A simple teleoperation system composed of two robots with two degrees of freedom is given in Section \ref{sec:simulations} as an example to show the effectiveness of the proposed method. Finally, the summary and conclusion of this paper is given in Section \ref{sec:conclusion}.

\textbf{Notations}: Throughout this paper, the superscript $T$ stands for matrix transposition, $I$ is used to denote the identity matrix with appropriate dimensions. $*$ represents a block matrix which is readily referred by symmetry. $\mathbb{R}^n$ denotes the $n$-dimensional Euclidean space with the vector norm $\|\cdot\|$, $\mathbb{R}^{n\times m}$ is the set of all $n\times m$ real matrices.  $\lambda_{\min}(M)$ and $ \lambda_{\max}(M)$ denote the maximum and the minimum eigenvalue of matrix $M=M^T\in\mathbb{R}^{n\times n}$, respectively. For any function $f:[0,\infty)\rightarrow \mathbb{R}^n$, the $\mathcal{L}_\infty$-norm is defined as $\|f\|_\infty:=\sup_{t\geq 0}|f(t)|$, and the square of the $\mathcal{L}_2$-norm as $\|f\|_2^2:=\int_{0}^\infty |f(t)|^2dt$. The $\mathcal{L}_\infty$ and $\mathcal{L}_2$ spaces are defined as the sets $\{f:[0,\infty)\rightarrow \mathbb{R}^n,\|f\|_\infty<\infty\}$ and $\{f:[0,\infty)\rightarrow \mathbb{R}^n,\|f\|_2<\infty\}$, respectively.

\section{Problem Formulation and Preliminaries}\label{sec:pro_formulation}

Consider teleoperation systems described as follows:
\begin{align}
&M_m(q_m)\ddot{q}_m\!+\!C_m(q_m,\dot{q}_m)\dot{q}_m\!+\!G_m(q_m)\!=\!F_m\!+\!\tau_m\label{eq:master}\\
&M_{s}(q_{s})\ddot{q}_{s}\!+\!C_{s}(q_{s},\dot{q}_{s})\dot{q}_{s}\!+\!G_{s}(q_{s})=F_{s}\!+\!\tau_{s}\label{eq:slave}
\end{align}
where $q_i,\dot{q}_i,\ddot{q}_i\in\mathbb{R}^n$ are the joint positions, velocities and accelerations of the master and slave devices with $i=m$ or ${s}$ representing the master or the slave robot manipulator respectively. Similarly,  $M_{i}$ represents the mass matrix, $C_{i}(q_i, \dot{q_i})$ embodies the Coriolis and centrifugal effects, $\tau_i$ is the control force, and finally $F_m, F_{s}$ are the external forces applied to the manipulator end-effectors. Each robot in (\ref{eq:master}) and (\ref{eq:slave}) satisfies the structural properties of robotic systems, i.e., the following properties \cite{Nuno2011_tutorial}, \cite{Spong2006_book}:
\begin{enumerate}
\item[P1.] The inertia matrix $M_i(q_i)$ is a symmetric positive-definite function and is lower and upper bounded. i.e., $0<\rho_i^m I\leq M_i(q_i)\leq \rho_i^M I<\infty$, where $\rho_i^m,\rho_i^M$ are positive scalars.
\item[P2.] The matrix $\dot{M}_i(q_i)-2C_i(q_i,\dot{q_i})$ is skew symmetric.
\item[P3.] For all $q_i, x, y\in\mathbb{R}^{n\times 1}$, there exists a positive scalar $c_i$ such that $\|C_i(q_i,x)y\|\leq c_i\|x\|\|y\|$.
\item[P4.] The equations of motion of $n-$link robot can be linearly parameterized as
\begin{align}
M_i(q_i)\ddot{q}_i\!+\!C_i(q_i,\dot{q}_i)\dot{q}_i\!+\!G_i(q_i)\!=\!Y_{io}(q_i,\dot{q}_i,\ddot{q}_i)\theta_i\triangleq y_i,\label{eq:regeressor}
\end{align}
where $Y_{io}(q_i,\dot{q}_i,\ddot{q}_i)\triangleq Y_{io}\in \mathbb{R}^{n\times n_i} $ is a matrix of known functions called regeressor, and $\theta_i\in\mathbb{R}^{n_i}$ is a vector of unknown parameters.
\end{enumerate}

In this paper, we assume that  the master and the slave are coupled with a communication network with time-varying time delays. Hence, the following standard assumption is used.
\begin{assumption}\label{amp:delay}
There exist positive constants $h_i$ and $d_i$ such that the variable communication time-delays $T_i(t)$ satisfies
\begin{eqnarray}
0\leq T_i(t)\leq h_i,\label{eq:BoundofDelay}\\
|\dot{T}_i(t)|\leq d_i<1.\label{eq:BoundofDelaysderivative}
\end{eqnarray}

\end{assumption}

\section{Adaptive Control Design}\label{sec:controldesign}
Suppose the positions  of the master and the slave are available for measurement and are  transmitted through the delayed network communication.
Let $e_i\in\mathbb{R}^n$ denote the position errors by
\begin{align}
e_m\triangleq q_m-q_s(t-T_s(t))\\
e_s\triangleq q_s-q_m(t-T_m(t))
\end{align}
and the velocity errors by
\begin{align}
e_{vm}\triangleq \dot{q}_m-\dot{q}_s(t-T_s(t))\\
e_{vs}\triangleq \dot{q}_s-\dot{q}_m(t-T_m(t))
\end{align}
and then
\begin{align}
\dot{e}_m=\dot{q}_m-(1-\dot{T}_s(t))\dot{q}_s(t-T_s(t))\label{eq:dem}\\
\dot{e}_s=\dot{q}_s-(1-\dot{T}_m(t))\dot{q}_m(t-T_m(t))\label{eq:des}
\end{align}

we define the following auxiliary variables:
\begin{align}
\eta_{m}&\triangleq\dot{q}_m+\lambda_me_{m}\label{eq:eta_m}\\
\eta_{s}&\triangleq\dot{q}_{s}+\lambda_se_{s}\label{eq:eta_s}
\end{align}
where $\lambda_m,\lambda_s$ are positive definite matrices.
By using Property P4, letting
\begin{eqnarray}
Y_i\theta_i&=&Y_i(q_i,\dot{q}_i,e_i,e_{vi})\theta_i\nonumber\\
&=&M_i(q_i)\lambda_ie_{vi}+C_i(q_i,\dot{q}_i)\lambda_ie_i-G_i(q_i),\label{eq:linearparametrization}
\end{eqnarray} for $i=m,s$,
the following control laws for the master and the slaves are proposed:
\begin{align}
\tau_m&=-Y_m\hat{\theta}_m-K_m\eta_{m}\label{eq:taum}\\
\tau_{s}&=-Y_{s}\hat{\theta}_{s}-K_{s}\eta_{s}\label{eq:taus}
\end{align}
where $\hat{\theta}_i$ is the estimate of $\theta_i$, $0<K_i\in\mathbb{R}^{n\times n}$.

Substituting the control law (\ref{eq:taum}-\ref{eq:taus}) into the teleoperation dynamics (\ref{eq:master}-\ref{eq:slave}), we obtain  the following dynamics for $t>0$:
\begin{align}
\left\{
\begin{aligned}
&M_m(q_m)\dot{\eta}_m+C_m(q_m,\dot{q}_m)\eta_m+K_m\eta_m\\&=Y_m\tilde{\theta}_m+F_m+\lambda_mM_m(q_m)\dot{T}_s\dot{q}_s(t-T_s(t))
\\
&M_{s}(q_{s})\dot{\eta}_{s}+C_{s}(q_{s},\dot{q}_{s})\eta_{s}+K_s\eta_{s}\\&=Y_{s}\tilde{\theta}_{s}+F_{s}+\lambda_sM_s(q_s)\dot{T}_m\dot{q}_m(t-T_m(t))\end{aligned}
\right.\label{eq:closeddynamics}
\end{align}
where $\tilde{\theta}_i\triangleq \theta_i-\hat{\theta}_i$.
\begin{remark}
Compared with the existing work \cite{sarras2014adaptive} which considers time-varying communication delays for teleoperation systems, $Y_i$ does not depend on the derivative of position error $e_i$ in the linear-parametrization term (\ref{eq:linearparametrization})  in this paper, which means that the time derivative of the communication delays are not required to formulate the matrix $Y_i$. This makes the controller more suitable for real applications since the values of the time delays  and the derivatives of  the time delays are not obtainable in real applications.
\end{remark}

A straightforward choice of the adaptive law $\dot{\hat{\theta}}_i=\Gamma_iY_i^T\eta_i$ was first proposed by J. J. Slotine \cite{slotine1991applied} and has been widely used in adaptive control of teleoperation systems \cite{Nuno2011_tutorial, Nuno2010improvedsyn, sarras2014adaptive}.
However, it is pointed that this adaptive law cannot guarantee accurate estimations of parameters. In order to achieve convergence of parameters to their true values, the estimation error $\tilde{\theta}_i$ should be introduced into the control design. However, the value of $\tilde{\theta}_i$ is not obtainable since the true value of $\theta_i$ is not available, and thus a prediction error $e_{io}=Y_{io}\tilde{\theta}_i$ or its filtered counterpart $e_{iw}=\alpha\int_{0}^te^{-\delta(t-v)}e_{io}(v)dv$ is used to improve the tracking performance. However, the use of $e_{io}$ or $e_{iw}$ still needs the PE condition to make the system exponential stable.  In the following, we introduce an auxiliary variable  $z_i$ such that $z_i=P_i\tilde{\theta}_i$,where $P_i$ is a designed lower bounded positive-definite matrix, to adaptive control of the teleoperation system. Thus, the following adaptive laws are proposed for the master and the slave,
\begin{align}
\dot{\hat{\theta}}_i&=\Gamma_i(Y_i^T\eta_i+(\xi_i+\delta_i)z_i)\label{eq:updatelaw_m}\\
\dot{z}_i&=-\mu_iz_i+Y_{io}^Te_{io}-P_i\dot{\hat{\theta}}_i,z_i(0)=0\label{eq:updatelaw_z}\\
\dot{P}_i&=-\mu_iP_i+Y_{io}^TY_{io},P_i(0)=P_{i0}>0\label{eq:update_P}\\
\mu_i&=\mu_{i0}(1-\kappa_{i0}\|P_i^{-1}\|)\label{eq:update_lambda}
\end{align}

where
\begin{align}
e_{mo}\triangleq y_m-Y_{mo}\hat{\theta}_m=Y_{mo}\tilde{\theta}_m\label{eq:predictionerror_m}\\
e_{so}\triangleq y_s-Y_{so}\hat{\theta}_s=Y_{so}\tilde{\theta}_s\label{eq:predictionerror_s}
\end{align}
and $\kappa_{i0}$ and $\mu_{i0}$ are two positive constants specifying the lower bound of the norm of $P_i$ and the maximum forgetting rate \cite{slotine1991applied}, $\delta_i$ is a positive constant. $\Gamma_m\in\mathbb{R}^{n_m\times n_m}$ and  $\Gamma_s\in\mathbb{R}^{n_s\times n_s}$ are two constant positive definite matrices.  From Eqs. (\ref{eq:update_P}) and (\ref{eq:update_lambda}), one can show that $\forall t\geq 0$, $\mu_i\geq 0$, and $P_i\geq \kappa_{i0}I$.

  The coefficient $\xi_i$ is given by
  \begin{align}
  \xi_i=\alpha_i\frac{\|Y_i^T\eta_i\|}{\kappa_{i0}}.
  \end{align}
  where $\alpha_i>0$ is a constant.

\begin{remark}
By (\ref{eq:predictionerror_m}-\ref{eq:predictionerror_s}), it is easy to see that the prediction error $e_{io}$ is related to the regressor $Y_i$,
which requires the information of joint acceleration. To avoid this,  the adaptive law (\ref{eq:updatelaw_m}-\ref{eq:update_lambda}) with filtered torques  and filtered regressor ${Y}_{iw}$ could be used. The filtered prediction errors of estimated parameters are defined as
\begin{align}
e_{mw}\triangleq y_{mw}-Y_{mw}\hat{\theta}_m=Y_{mw}\tilde{\theta}_m\\
e_{sw}\triangleq y_{sw}-Y_{sw}\hat{\theta}_s=Y_{sw}\tilde{\theta}_s
\end{align}
where $y_i$ is the filtered forces $\tau_i+F_i$, i.e.,
\[y_{iw}=\alpha\int_0^te^{-\alpha(t-\delta)}y_{i}d\delta\]
and can be calculated without acceleration terms $M_i(q_i)\ddot{q}_i$ by convolving both sides of (\ref{eq:regeressor}) by a filter $W(s)=\frac{\alpha}{s+\alpha}$ \cite{kim2013design}.

\end{remark}

\section{Stability Analysis}\label{sec:stability}

Denote $x_m=[\eta_m^T,\tilde{\theta}_m]^T$,  $x_s=[\eta_s^T,\tilde{\theta}_s]^T$, $x=[x_m^T, x_s^T]^T$, and define the new state $x_t(s)\triangleq x(t+s), s\in[-h,0]$ which take values in $C([-h,0];\mathbb{R}^{n+m+s})$, $h=\max\{h_m, h_s\}$

The following theorem summarizes the stability result of the consider teleoperation system when it is in free motion.
\begin{theorem}\label{Thm:mainresult}
Consider the bilateral teleoperation system (\ref{eq:master}-\ref{eq:slave}) controlled by (\ref{eq:taum}-\ref{eq:taus}) together with the updating law (\ref{eq:updatelaw_m}-\ref{eq:update_lambda}) under the communication channel satisfying Assumption \ref{amp:delay}, if there exist positive-definite matrices $R_{m},R_{s}$ such that the following linear matrix inequality (LMI) holds, respectively:
\begin{eqnarray}
\Pi=\begin{bmatrix}\Pi_1 & 0 &0 & -I\\
* & \Pi_2 &-I & 0\\
* & *& -\frac{R_m}{h_m} & 0\\
* & * & * &-\frac{R_s}{h_s} \end{bmatrix}<0,\label{eq:LMI_PI}
\end{eqnarray}
with
\begin{align*}
\Pi_1=-\frac{1}{\lambda_m}I+h_mR_m+\frac{\lambda_s(\rho_s^M)^2d_m^2}{(1-d_m)k_s^2}I, \\
\Pi_2=-\frac{1}{\lambda_s}I+h_sR_s+\frac{\lambda_m(\rho_m^M)^2d_s^2}{(1-d_s)k_m^2}I,
\end{align*}
then the following claims hold  if the teleoperation system is in free motion, that is, $F_m=F_s=0$:\begin{enumerate}
\item all the signals are bounded and the position errors, velocities and the estimation errors asymptotically converge to zero, that is, $\lim_{t\rightarrow \infty} q_m-q_s=\lim_{t\rightarrow \infty}\dot{q}_m=\lim_{t\rightarrow \infty}\dot{q}_s=\lim_{t\rightarrow \infty}\tilde{\theta}_m=\lim_{t\rightarrow \infty}\tilde{\theta}_s=0$.
\item the estimation errors converge into a specified domain within a given time.    \label{it:pro_item1}

\end{enumerate}
\end{theorem}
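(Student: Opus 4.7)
My plan is to construct a composite Lyapunov-Krasovskii functional, exploit the skew-symmetry property P2 to kill the Coriolis cross term, use the adaptive laws to cancel the $Y_i\tilde{\theta}_i$ coupling, and then absorb the time-varying-delay perturbation with Young's inequality into the LMI block $\Pi$. Concretely, I would take
\begin{equation*}
V = \sum_{i\in\{m,s\}}\!\!\left[\tfrac{1}{2}\eta_i^T M_i(q_i)\eta_i + \tfrac{1}{2}\tilde{\theta}_i^T \Gamma_i^{-1}\tilde{\theta}_i + \int_{-h_i}^{0}\!\!\int_{t+\theta}^{t}\!\dot{q}_i(s)^T R_i\,\dot{q}_i(s)\,ds\,d\theta\right] + V_z,
\end{equation*}
where $V_z$ is a positive quadratic functional of the mismatch $\tilde{z}_i\triangleq z_i - P_i\tilde{\theta}_i$; a direct computation using (\ref{eq:updatelaw_z})--(\ref{eq:update_P}) gives $\dot{\tilde{z}}_i = -\mu_i \tilde{z}_i$, so $\tilde{z}_i$ is a benign, exponentially-fading disturbance.

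Differentiating $V$ along (\ref{eq:closeddynamics}) with $F_m=F_s=0$, property P2 eliminates the Coriolis term and leaves $\dot{V}_1 = -\sum_i\eta_i^T K_i\eta_i + \sum_i\eta_i^T Y_i\tilde{\theta}_i + \sum_i\lambda_i\eta_i^T M_i\,\dot{T}_{\bar{\imath}}\,\dot{q}_{\bar{\imath}}(t-T_{\bar{\imath}}(t))$, where $\bar{\imath}$ denotes the opposite index. Substituting (\ref{eq:updatelaw_m}) into $\dot{V}_2$ cancels the cross term $\eta_i^T Y_i\tilde{\theta}_i$ and produces $-\sum_i(\xi_i+\delta_i)\tilde{\theta}_i^T z_i$. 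Decomposing $z_i = P_i\tilde{\theta}_i + \tilde{z}_i$ and using $P_i\ge\kappa_{i0}I$ converts this into a sign-definite $-\sum_i(\xi_i+\delta_i)\kappa_{i0}\|\tilde{\theta}_i\|^2$ term plus a cross term in $\tilde{z}_i$ that is swept into $\dot{V}_z$ by Young's inequality; this is the step where PE is sidestepped, since the damping on $\tilde{\theta}_i$ is now driven by $\delta_i\kappa_{i0}$ rather than by excitation of $Y_i$.

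The delay perturbation $\lambda_i\eta_i^T M_i\,\dot{T}_{\bar{\imath}}\,\dot{q}_{\bar{\imath}}(t-T_{\bar{\imath}})$ is the technical core: using $\|M_i\|\le\rho_i^M$, $|\dot{T}_{\bar{\imath}}|\le d_{\bar{\imath}}$, and weighted Young's inequality with weight $(1-d_{\bar{\imath}})k_{\bar{\imath}}^2$ produces precisely the coefficient $\lambda_i(\rho_i^M)^2 d_{\bar{\imath}}^2/[(1-d_{\bar{\imath}})k_{\bar{\imath}}^2]$ appearing in $\Pi_1,\Pi_2$, while Jensen's inequality applied to $\dot{V}_3 = \sum_i h_i\dot{q}_i^T R_i\dot{q}_i - \sum_i\int_{t-h_i}^{t}\dot{q}_i^T R_i\dot{q}_i\,ds$ contributes the $h_iR_i$ block in $\Pi_1,\Pi_2$ and the diagonal $-R_i/h_i$ blocks. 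The $-I$ off-diagonals in $\Pi$ come from pairing $\eta_i$ against $\dot{q}_{\bar{\imath}}(t-T_{\bar{\imath}})$. Assembling everything into a quadratic form in $\zeta=[\eta_m^T,\eta_s^T,\dot{q}_m(t-T_m)^T,\dot{q}_s(t-T_s)^T]^T$, the LMI $\Pi<0$ yields $\dot{V} \le -\epsilon(\|\eta_m\|^2+\|\eta_s\|^2+\|\tilde{\theta}_m\|^2+\|\tilde{\theta}_s\|^2) + c\,e^{-\gamma t}$. Boundedness of all signals is immediate; Barbalat's lemma applied to the uniformly continuous $\eta_i$ and $\tilde{\theta}_i$ gives $\eta_i,\tilde{\theta}_i\to 0$, whence (\ref{eq:eta_m})--(\ref{eq:eta_s}) together with Assumption \ref{amp:delay} force $\dot{q}_i\to 0$ and $q_m-q_s\to 0$.

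For item (\ref{it:pro_item1}), the high-gain nonsmooth term $\xi_i=\alpha_i\|Y_i^T\eta_i\|/\kappa_{i0}$ plays no role in the asymptotic claim but sharpens the transient: combining the $-\delta_i\kappa_{i0}\|\tilde{\theta}_i\|^2$ contribution with the $\tilde{z}_i$ exponential decay should yield an inequality of the form $\dot{V}\le -2\rho V + c\,e^{-\gamma t}$, so that $\|\tilde{\theta}_i\|$ enters any prescribed $\varepsilon$-ball after a computable time $T(\varepsilon)$ governed by $\delta_i,\kappa_{i0},V(0)$. The main obstacle, and where I would spend most of the effort, is the bookkeeping of the Young-inequality weights so that the quadratic form in $\zeta$ matches $\Pi$ exactly rather than a coarser decoupled bound; the weights must be chosen consistently across the master and slave subsystems in order to reproduce the cross-coupled $\Pi_1$--$\Pi_2$ structure instead of two separate, strictly more restrictive, LMIs.
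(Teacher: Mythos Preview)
Your broad strategy is sound, but the Lyapunov--Krasovskii functional you propose is missing two pieces that are indispensable for recovering the \emph{stated} LMI $\Pi$, and your identification of the LMI variable is wrong. In the paper the quadratic form is taken in the vector $\xi=\mathrm{col}\{\dot q_m,\dot q_s,L_m,L_s\}$ with $L_i=\int_{t-T_i(t)}^{t}\dot q_i(s)\,ds$, not in $\zeta=[\eta_m^T,\eta_s^T,\dot q_m(t-T_m)^T,\dot q_s(t-T_s)^T]^T$. Two extra terms in the functional make this work: (i) $V_2=(q_m-q_s)^T(q_m-q_s)$, whose derivative, after writing $q_m-q_s=e_m-L_s=-e_s+L_m$, produces the cross terms $-2L_s^T\dot q_m$ and $-2L_m^T\dot q_s$ that are exactly the $-I$ off-diagonal blocks of $\Pi$, and simultaneously supplies $+2e_i^T\dot q_i$ to cancel the mixed term coming from the expansion $\tfrac{1}{\lambda_i}\|\eta_i\|^2=\tfrac{1}{\lambda_i}\|\dot q_i\|^2+2e_i^T\dot q_i+\lambda_i\|e_i\|^2$; (ii) $V_4=\sum_i\nu_i\int_{t-T_i(t)}^t\|\dot q_i(s)\|^2ds$ with $\nu_i=\lambda_j(\rho_j^M)^2d_i^2/[(1-d_i)k_j^2]$, which trades the delayed-velocity residual $\|\dot q_j(t-T_j)\|^2$ left over from Young's inequality for a current-velocity term $\nu_i\|\dot q_i\|^2$, and it is this that places the coefficient $\lambda_j(\rho_j^M)^2d_i^2/[(1-d_i)k_j^2]$ on the $(1,1)$ and $(2,2)$ blocks rather than on a delayed-variable block. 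Without $V_2$ and $V_4$, and without the scaling $\tfrac{2}{k_i\lambda_i}$ on the storage $V_i$, your bookkeeping cannot land on $\Pi$: you will get a quadratic form in mixed variables $\eta_i,\dot q_i,\dot q_j(t-T_j),L_i$ whose negative-definiteness is neither implied by nor equivalent to $\Pi<0$. This is a structural gap, not just a matter of choosing Young weights.

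Two smaller remarks. Your observation that $\tilde z_i\triangleq z_i-P_i\tilde\theta_i$ satisfies $\dot{\tilde z}_i=-\mu_i\tilde z_i$ is correct and in fact more careful than the paper, which simply substitutes $z_i=P_i\tilde\theta_i$; note however that $\mu_i\ge 0$ is only nonnegative (it vanishes when $\|P_i^{-1}\|=1/\kappa_{i0}$), so ``exponentially fading'' overstates what you can guarantee. For Claim~2, the route $\dot V\le -2\rho V+ce^{-\gamma t}$ does not go through because the Krasovskii integrals in $V$ are not dominated by the instantaneous quadratic terms; the paper instead isolates $V_\theta=\sum_i\tfrac{2}{k_i\lambda_i}\tilde\theta_i^T\Gamma_i^{-1}\tilde\theta_i$ and shows, using the high-gain factor $\xi_i=\alpha_i\|Y_i^T\eta_i\|/\kappa_{i0}$, that $\dot V_\theta<0$ whenever $\|\tilde\theta_i\|\ge 1/\alpha_i$, which gives finite-time entry into the specified ball.
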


\begin{proof} Defining the following function:
\begin{equation}
V_i(x,t)=\frac{1}{2}\eta_i^TM_i(q_i)\eta_i+\frac{1}{2}\tilde{\theta_i}^T\Gamma_i^{-1}\tilde{\theta}_i
\end{equation}
It is obvious that $V_i$ is positive definite and radially unbounded with regard to $\eta_i$ and $\tilde{\theta}_i$.
By using the Property P2,  the derivative of $V_i$ along the trajectory of system (\ref{eq:closeddynamics}) is
 \begin{eqnarray*}
 \dot{V}_i(x,t)&=&-\eta_i^TK_i\eta_i+\eta_i^T(\lambda_iM_i(q_i)\dot{T}_j(t)\dot{q}_j(t-T_j(t))\\
 &&-\tilde{\theta}_i^T(\xi_iP_i+\delta_iP_i)\tilde{\theta}_i\\
 &\leq & -\frac{k_i}{2}\eta_i^T\eta_i-\delta_i\kappa_{i0}\tilde{\theta}_i^T\tilde{\theta}_i\\
 &&+\frac{\lambda_i^2 }{2k_i}(\rho_i^M)^2d_j^2\dot{q}_j^T(t-T_j(t))\dot{q}_j(t-T_j(t))
 \end{eqnarray*}
 where $k_i=\lambda_{\min}\{K_i\}$,
 when $F_m\equiv 0, F_s\equiv 0$.
Now we give the following Lyapunov functional
\begin{equation}
V=V_1+V_2+V_3+V_4\label{eq:V}\end{equation} with
\begin{eqnarray}
V_1&=&\frac{2}{k_m\lambda_m}V_m(x,t)+\frac{2}{k_s\lambda_s}V_s(x,t)\label{eq:V1}\\
V_2&=&(q_m-q_s)^T(q_m-q_s)\label{eq:V2}\\
V_3&=&\sum_{i=m,s}\int^0_{-{h_i}}\int^t_{t+\theta}\dot{q}_i^T(s)R_{i}\dot{q}_{i}(s)dsd\theta\label{eq:V3}\\
V_4&=&\sum_{i=m,s}\int_{t-T_i(t)}^t\nu_i\dot{q}_i^T(s)\dot{q}_i(s)ds\label{eq:V4}
\end{eqnarray}
where  $\nu_i=\frac{\lambda_j(\rho_j^M)^2d_i^2}{(1-d_i)k_j^2}$, $i,j=m,s, i\neq j$. Obviously, $\nu_i>0$.

When the external forces $F_m\equiv F_s\equiv 0$, by (\ref{eq:eta_m}),
the derivative of $V_1$ along with the trajectory of system  (\ref{eq:closeddynamics}) is given by
 \begin{eqnarray*}
\dot{V}_1(x,t)&=&\frac{2}{k_m\lambda_m}\dot{V}_m(x,t)+\frac{2}{k_s\lambda_s}\dot{V}_s(x,t)\\
&\leq &-\sum_{i=m,s}(\frac{\dot{q}_i^T\dot{q}_i}{\lambda_i}+2e_i^T\dot{q}_i+\lambda_ie_i^Te_i\\
&&+\frac{2\delta_i\kappa_{i0}}{k_i\lambda_i}\tilde{\theta}_i^T\tilde{\theta}_i-\frac{\lambda_i}{k_i^2}(\rho_i^M)^2d_j^2\|\dot{q}_j(t-T_j(t))\|^2)
\end{eqnarray*}

It is noted that the position error $e\triangleq q_m-q_s$ can be expressed as
\begin{align}
e=e_m-L_s=-e_s+L_m\label{eq:e}
\end{align}
where $L_m=\int_{t-T_m(t)}^t\dot{q}_m(s)ds$, $L_s=\int_{t-T_s(t)}^t\dot{q}_s(s)ds$, hence the time derivative of $V_2$  along with the trajectory of system  (\ref{eq:closeddynamics}) is given by
\begin{align*}
\dot{V}_2=2(e_m-L_s)^T\dot{q}_m+2(e_s-L_m)^T\dot{q}_s
\end{align*}

Calculating the time derivative of $V_3$, one has that
\begin{eqnarray*}
\dot{V}_3&=&\sum_{i=m,s}h_i\dot{q}_i^TR_{i}\dot{q}_i-\int_{t-h_i}^t\dot{q}_i^T(s)R_{i}\dot{q}_i(s)ds\\
&\leq&\sum_{i=m,s}h_i\dot{q}_i^TR_{i}\dot{q}_i-\int_{t-T_i(t)}^t\dot{q}_i^T(s)R_{i}\dot{q}_i(s)ds
\\&\leq&\sum_{i=m,s}h_i\dot{q}_i^TR_{i}\dot{q}_i-\frac{1}{h_i}L_iR_iL_i
\end{eqnarray*}
by Jensen's inequality.

The derivative of $V_4$ is given by
\[
\dot{V}_4\leq \sum_{i=m,s}\nu_i\dot{q}_i^T\dot{q}_i-\nu_i(1-d_i)\dot{q}_i^T(t-T_i(t))\dot{q}_i(t-T_i(t))
\]

Thus, we have
\begin{eqnarray}
\dot{V}=\sum_{i=1,2,3,4}\dot{V}_i\leq -\xi\Pi\xi-\sum_{i=m,s} (\frac{2\delta_i\kappa_{i0}}{k_i\lambda_i}|\tilde{\theta}_i|^2+\lambda_ie_i^Te_i)\label{eq:dotV}
\end{eqnarray}
where $\xi=\text{col}\{\dot{q}_m,\dot{q}_s,L_m,L_s\}$, and $\Pi$ is given in (\ref{eq:LMI_PI}).

By  (\ref{eq:LMI_PI}), we have that $\dot{V}<0$ and $V>0$. Hence, all the signals are bounded. Furthermore, by (\ref{eq:V}) and  (\ref{eq:dotV}), one has that $e_i\in\mathcal{L}_2\cap\mathcal{L}_\infty$, $\tilde{\theta}_i\in\mathcal{L}_2\cap\mathcal{L}_\infty$, $\dot{q}_i\in\mathcal{L}_2\cap\mathcal{L}_\infty$, $L_i\in\mathcal{L}_2\cap\mathcal{L}_\infty$. Thus by (\ref{eq:dem}-\ref{eq:des}), one has that $\dot{e}_m, \dot{e}_s\in\mathcal{L}_\infty$. Now invoking Barbalat's Lemma, we conclude that $\lim_{t\rightarrow\infty}e_i(t)=0$.   Similarly, we have that $\lim_{t\rightarrow\infty}L_i(t)=0$ by Barbalat's Lemma. Thus, by (\ref{eq:e}), we arrive at $\lim_{t\rightarrow\infty}e(t)=0$.

Furthermore, the boundedness of $q_i, \dot{q}_i, e_i$ implies that $\eta_i,\tau_i\in\mathcal{L}_\infty$. Thus by the dynamic model (\ref{eq:master}-\ref{eq:slave}) and the Properties P1, P3, we otbain that $\ddot{q}_i\in\mathcal{L}_\infty$. Hence involing Barbalat's Lemma again, we arrive at that $\lim_{t\rightarrow\infty}\dot{q}_i(t)=0$.

Now we show that the parameter estimation error $\tilde{\theta}_i$ approaches to zero as $t\rightarrow \infty$. Note that the parameter adaption law (\ref{eq:updatelaw_m}) implies that
\[\dot{\tilde{\theta}}_i=-\Gamma_i(Y_i^T\eta_i+(\xi_i+\delta_i)Pi\tilde{\theta}_i)\in\mathcal{L}_\infty\]
Similarly, the conclusion that $\lim_{t\rightarrow \infty}\tilde{\theta}_i(t)=0$ is guaranteed by using  Barbalat's Lemma.
By now, Claim 1 is established.

To illustrate the transient performance of the teleoperators, we start from the convergence of estimation errors $\tilde{\theta}_i$. Obviously, Let $V_\theta(t)\triangleq\sum_{i=m,s}\frac{2}{k_i\lambda_i}\tilde{\theta}_i^T\Gamma_i^{-1}\tilde{\theta}_i$. The time derivative of $V_\theta$ is given by
\begin{align*}
\dot{V}_\theta(t)=&\sum_{i=m,s}\frac{2}{k_i\lambda_i}\tilde{\theta}_i^T\Gamma_i^{-1}\dot{\tilde{\theta}}_i\\
\leq &\sum_{i=m,s}\frac{2}{k_i\lambda_i}(-\tilde{\theta}_i^TY_i^T\eta_i-\tilde{\theta}_i^T\alpha_i\|Y_i^T\eta_i\|\tilde{\theta}_i-\delta_i\kappa_{i0}\tilde{\theta}_i^T\tilde{\theta}_i)\\
\leq &\sum_{i=m,s}\frac{2}{k_i\lambda_i}((1-\alpha_i\|\tilde{\theta}_i\|)\|Y_i^T\eta_i\|\|\tilde{\theta}_i\|-\delta_i\kappa_{i0}\|\tilde{\theta}_i\|^2)
\end{align*}
Thus if $\|\theta_i\|\geq 1/\alpha_i$, we have $\dot{V}_\theta(t)\leq -\sum_{i=m,s}\frac{2\delta_i\kappa_{i0}}{k_i\lambda_i}\|\tilde{\theta}_i\|^2$. This implies that $\dot{V}_\theta$ is always negative when $\|\theta\|\geq \sqrt{2}/\alpha$ with $\alpha=\min\{\alpha_m,\alpha_s\}$.
So the parameter error $\tilde{\theta}(t)=\text{col}\{\tilde{\theta}_m, \tilde{\theta}_s\}$ will converge to a sphere $\Omega_\theta=\{\tilde{\theta}:\|\tilde{\theta}\|\leq \sqrt{2\lambda_M^\theta\mu_M^\theta/\lambda_m^\theta\mu_m^\theta}/\alpha\}$,
where $\lambda_m^\theta=\min\{\lambda_{\min}(\Gamma_m^{-1}), \lambda_{\min}(\Gamma_s^{-1})\} $, $\lambda_M^\theta=\max\{\lambda_{\max}(\Gamma_m^{-1}), $ $\lambda_{\max}(\Gamma_s^{-1})\}$,$\mu_m^\theta=\min\{\frac{1}{k_m\lambda_m}, \frac{1}{k_s\lambda_s}\}$, $\mu_M^\theta=\max\{\frac{1}{k_m\lambda_m},$ $ \frac{1}{k_s\lambda_s}\}$ within a given time.  The proof of Claim 2 is concluded.

\end{proof}

\begin{remark}
Compared to the existing works \cite{chopra2008brief, Nuno2010improvedsyn, SARRAS2014_JFI}, the proposed control scheme guarantees the convergence of parameters to their true values, while neither the PE or SE condition is required. This is accomplished by the boundedness of the matrix $P_i$ in the new-defined prediction error $z_i$.
\end{remark}

When the external forces are not zero, we have the following result.
\begin{proposition}
Consider the bilateral teleoperation system (\ref{eq:master}-\ref{eq:slave}) controlled by (\ref{eq:taum}-\ref{eq:taus}) together with the updating law (\ref{eq:updatelaw_m}-\ref{eq:update_lambda}) under the communication channel satisfying Assumption \ref{amp:delay}, if there exist positive-definite matrices $R_{m},R_{s}$ such that the following LMI holds, respectively:
\begin{eqnarray}
\Xi=\begin{bmatrix}\Xi_1 & 0 &0 & -I\\
*& \Xi_2 &-I & 0\\
* & *& -\frac{R_m}{h_m} & 0\\
* & * & * &-\frac{R_s}{h_s} \end{bmatrix}<0,\label{eq:LMI_PI}
\end{eqnarray}
with
\begin{align*}
\Xi_1=-\frac{1}{\lambda_m}I+h_mR_m+\frac{2\lambda_s(\rho_s^M)^2d_m^2}{(1-d_m)k_s^2}I, \\
\Xi_2=-\frac{1}{\lambda_s}I+h_sR_s+\frac{2\lambda_m(\rho_m^M)^2d_s^2}{(1-d_s)k_m^2}I,
\end{align*}
then  when the external forces satisfy that  $F_h, F_e\in\mathcal{L}_\infty\cap\mathcal{L}_2$,  all the signals are bounded and the position errors, velocities and the estimation errors asymptotically converge to zero, that is, $\lim_{t\rightarrow \infty} q_m-q_s=\lim_{t\rightarrow \infty}\dot{q}_m=\lim_{t\rightarrow \infty}\dot{q}_s=\lim_{t\rightarrow \infty}\tilde{\theta}_m=\lim_{t\rightarrow \infty}\tilde{\theta}_s=0$. Moreover, the estimation errors converge into a specified domain within a given time.
\end{proposition}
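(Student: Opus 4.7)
The strategy is to replicate the Lyapunov analysis of Theorem~\ref{Thm:mainresult} essentially verbatim, but to absorb the two new cross–terms $\eta_m^{T}F_m$ and $\eta_s^{T}F_s$ that now appear in $\dot V_i$ through a slightly greedier use of Young's inequality. The doubling of the delay–dependent entries in $\Xi_1,\Xi_2$ relative to $\Pi_1,\Pi_2$ is precisely the price paid for reserving a second share of $\eta_i^{T}K_i\eta_i$ to dominate the force term, so the LMI structure is the natural analogue.

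I would keep the same Lyapunov functional $V=V_1+V_2+V_3+V_4$ defined in \eqref{eq:V}--\eqref{eq:V4}, but with the enlarged weight $\nu_i=\frac{2\lambda_j(\rho_j^M)^2 d_i^2}{(1-d_i)k_j^2}$ in $V_4$. When computing $\dot V_i$, the only change with respect to Theorem~\ref{Thm:mainresult} is the additional term $\eta_i^{T}F_i$. I would split the available $-\eta_i^{T}K_i\eta_i\le -k_i\|\eta_i\|^2$ into two quarter shares and one half share, and then use Young's inequality twice:
\begin{align*}
|\eta_i^{T}\lambda_i M_i(q_i)\dot T_j\dot q_j(t-T_j)|
&\le\tfrac{k_i}{4}\|\eta_i\|^2+\tfrac{\lambda_i^2(\rho_i^M)^2 d_j^2}{k_i}\|\dot q_j(t-T_j)\|^2,\\
|\eta_i^{T}F_i|
&\le\tfrac{k_i}{4}\|\eta_i\|^2+\tfrac{1}{k_i}\|F_i\|^2,
\end{align*}
so that $\dot V_i\le -\tfrac{k_i}{2}\|\eta_i\|^2-\delta_i\kappa_{i0}\|\tilde\theta_i\|^2+\tfrac{\lambda_i^2(\rho_i^M)^2 d_j^2}{k_i}\|\dot q_j(t-T_j)\|^2+\tfrac{1}{k_i}\|F_i\|^2$. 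Scaling by $2/(k_i\lambda_i)$ as in $V_1$ produces exactly the doubled delay coefficient appearing in $\Xi_1,\Xi_2$, together with a harmless forcing term $\sum_i\tfrac{2}{k_i^2\lambda_i}\|F_i\|^2$ on the right–hand side.

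Combining with the unchanged estimates for $\dot V_2,\dot V_3,\dot V_4$ yields
\[
\dot V\le -\xi^{T}\Xi\,\xi-\sum_{i=m,s}\Bigl(\tfrac{2\delta_i\kappa_{i0}}{k_i\lambda_i}\|\tilde\theta_i\|^2+\lambda_i\|e_i\|^2\Bigr)+\sum_{i=m,s}\tfrac{2}{k_i^2\lambda_i}\|F_i\|^2,
\]
with $\xi=\mathrm{col}\{\dot q_m,\dot q_s,L_m,L_s\}$. Under the LMI $\Xi<0$, the quadratic form is non–positive, while the forcing term is in $\mathcal{L}_1$ thanks to $F_m,F_s\in\mathcal{L}_2$. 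Integrating in time gives $V\in\mathcal{L}_\infty$ together with $\eta_i,\tilde\theta_i,e_i,\dot q_i,L_i\in\mathcal{L}_2\cap\mathcal{L}_\infty$. From here the Barbalat chain of Theorem~\ref{Thm:mainresult} applies verbatim: $F_i\in\mathcal{L}_\infty$ and P1--P3 give $\ddot q_i\in\mathcal{L}_\infty$ and hence $\dot\eta_i,\dot L_i,\dot e_i\in\mathcal{L}_\infty$, which yields $e_i\to 0$, $L_i\to 0$, $\dot q_i\to 0$, and, via \eqref{eq:e}, $q_m-q_s\to 0$; the adaptation law gives $\dot{\tilde\theta}_i\in\mathcal{L}_\infty$ so $\tilde\theta_i\to 0$. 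The ultimate–bound part of Claim~\ref{it:pro_item1} of Theorem~\ref{Thm:mainresult} uses only the estimate on $\dot V_\theta$, which is independent of $F_i$, so it carries over without modification.

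The main obstacle, as I see it, is purely the bookkeeping in the Young–inequality split: one must verify that the chosen partition $k_i/4+k_i/4+k_i/2$ simultaneously (i) leaves enough margin on $-\|\eta_i\|^2$ to reproduce the $-\tfrac{1}{\lambda_i}I$ diagonal of $\Xi$, (ii) produces precisely the factor $2$ in the delay coefficients, and (iii) renders the residual force term $\mathcal{L}_1$–integrable. Any other split would either weaken the LMI or destroy the integrability of the forcing residue; no essentially new delay–system argument is required beyond what was already deployed for Theorem~\ref{Thm:mainresult}.
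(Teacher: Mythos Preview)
Your plan is correct and matches the paper's argument in substance. The only cosmetic difference is the bookkeeping: the paper rescales the Lyapunov functional to $\bar V=2V_1+V_2+V_3+2V_4$ and keeps the original half--half Young split (using the extra copy of $-\tfrac{1}{\lambda_i}\|\eta_i\|^2$ to absorb $\eta_i^TF_i$), whereas you keep $V_1$ but tighten the Young split to quarters and double $\nu_i$ in $V_4$; both routes produce exactly the same $\Xi$ and the same $\mathcal L_1$ forcing residue, after which the Barbalat chain and the $V_\theta$ argument carry over unchanged.
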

\begin{proof}
Choose the Lyapunov-functional candidate $\bar{V}$ as follows:
\begin{equation}\bar{V}=2V_1+V_2+V_3+2V_4\label{eq:barV}\end{equation}where $V_1,V_2, V_3, V_4$ are defined in (\ref{eq:V1}), (\ref{eq:V2}), (\ref{eq:V3}), and (\ref{eq:V4}), respectively.Thus we have
\begin{eqnarray*}\dot{\bar{V}}&\leq&-\xi\Xi\xi+\sum_{i=m,s} (-\frac{4\delta_i\kappa_{i0}}{k_i\lambda_i}|\tilde{\theta}_i|^2-\lambda_ie_i^Te_i+\frac{4}{k_i^2\lambda_i}\|F_i\|^2)\label{eq:dbarV}\\
\end{eqnarray*}
Integrating both sides of  (\ref{eq:dbarV}) from $0$ to $\infty$, we have
\[\bar{V}(\infty)-\bar{V}(0)\leq \frac{4}{k_i\lambda_i}\int_{0}^\infty\|F_i\|^2\]
which implies that $\bar{V}_\infty\in\mathcal{L}_\infty$. Thus following the same line of reasoning in the proof of Theorem \ref{Thm:mainresult}, we  can obtain the conclusion.
\end{proof}

\section{New tracking performance measures}\label{sec:trackingindex}
In this section, we propose new tracking performance measures for bilateral teleoperation systems.  A requirement for bilateral teleoperation is that the slave should follow the master's motion. Specifically, when the slave is in free motion, there is no environmental force between the slave and the environment, and hence the slave should follow the master' motion very tightly.  However, in terms of the tracking performance, it should be noted that the tracking performance is related to the slave's desired position, i.e., the master's position, and it's actual position, i.e., the slave's position. Hence it is naturally to define a ratio between the position error and the slave's desired position as the position tracking performance measure $\Delta_p$, i.e.,
\begin{align}&\Delta_{p}^i=\frac{|q_m^i-q_s^i|}{|q_m^i|}, q_m^i\neq 0 \end{align}
where $i=1,...,n$ representing the $i$-th joint. For simplicity, we assume that $\Delta_p^i=0$ when $q_m^i=0$.
 Similarly, when there is contact between the environment and the slave robot, the force tracking performance is  related a ratio between the force error and the contact force, that is,
 \begin{align}\Delta_f^i=\frac{|F_m^i-F_s^i|}{|F_s^i|}|, F_s^i\neq 0.
\end{align}
where $i=1,...,n$, and $\Delta_f^i=0$ when $F_s^i=0$.

In the presence of communication, the measures  $\Delta_p, \Delta_f$ should be modified as
\begin{eqnarray}
\Delta_p^i=\left\{\begin{aligned}\frac{|q_m^i(t-T_m(t))-q_s^i|}{q_m^i(t-T_m(t))}, (q_m^i(t-T_m(t))\neq 0)\\0, (q_m^i(t-T_m(t))\neq 0)\end{aligned}\right.\\
\Delta_f^i=\left\{\begin{aligned}\frac{|F_m^i-F_s^i(t-T_s(t))|}{|F_s^i(t-T_s)|}, (F_s^i(t-T_s(t))\neq 0)\\0, (F_s^i(t-T_s(t))=0)\end{aligned}\right.
\end{eqnarray}
where $i=1,...,n$.
%

In summary , we define performance measure indexes for each joint
\begin{eqnarray}\Delta_{J_p}^i=\int_{0}^\infty|\Delta_p^i(t)|dt,\\\Delta_{J_f}^i=\int_{0}^\infty|\Delta_f^i(t)|dt\end{eqnarray} Obviously, the smaller the indexes $\Delta_{J_p}^i, \Delta_{J_f}^i$ are, the better the tracking performance is for each joint.
\section{Simulations}\label{sec:simulations}
In this section, the simulation results are shown to verify the effective of the main results.

\subsection{Simulation Setup}Consider a 2-DOF teleoperation system with the following dynamics
\begin{align}
&M_m\!(q_m\!)\ddot{q}_m\!+\!C_m\!(q_m,\dot{q}_m\!)\dot{q}_m\!+\!G_m\!(q_m\!)\!=\!J_m^TF_h\!+\!\tau_m\\
&M_{s}(q_{s})\ddot{q}_{s}\!+\!C_{s}(q_{s},\dot{q}_{s})\dot{q}_{s}\!+\!G_{s}(q_{s})=J_s^TF_{e}\!+\!\tau_{s}
\end{align}
where
\begin{align*}
&M_i(q_i)=\begin{bmatrix}
M_{i_{11}}(q_i)&M_{i_{12}}(q_i)\\
M_{i_{21}}(q_i)&M_{i_{22}}(q_i)
\end{bmatrix},\\
&C_i(q_i,\dot{q}_i)=\begin{bmatrix}
C_{i_{11}}(q_i,\dot{q}_i)&C_{i_{12}}(q_i,\dot{q}_i)\\
C_{i_{21}}(q_i,\dot{q}_i)&C_{i_{22}}(q_i,\dot{q}_i)
\end{bmatrix},\\
&G_i(q_i)=\begin{bmatrix}G_{i_1}\\G_{i_2}\end{bmatrix},
\end{align*}
for $i=m, s$, respectively, and
\begin{equation*}
\begin{aligned}
&M_{i_{11}}\!(q_i\!)\!=\!l_{i_2}^2m_{i_2}\!+\!l_{i_1}^2(m_{i_1}\!+\!m_{i_2})\!+\!2l_{i_1}l_{i_2}m_{i_2}\cos\!(q_{i_2}\!),\\
&M_{i_{22}}(q_i)\!=\!l_{i_2}^2m_{i_2},\\
&M_{i_{12}}(q_i)\!=\!M_{i_{21}}(q_i)=l_{i_2}^2m_{i_2}+l_{i_1}l_{i_2}m_{i_2}\cos(q_{i_2}),\\
&C_{i_{11}}(q_i,\dot{q}_i)\!=\!-l_{i_1}l_{i_2}m_{i_2}\sin(q_{2_i})\dot{q}_{i_2},
\\&C_{i_{12}}(q_i,\dot{q}_i)\!=\!-l_{i_1}l_{i_2}m_{i_2}\sin(q_{2_i})(\dot{q}_{1_i}+\dot{q}_{2_i}),&\\
&C_{i_{21}}(q_i,\dot{q}_i)\!=\!l_{i_1}l_{i_2}m_{i_2}\sin(q_{2_i})\dot{q}_{1_i},C_{i_{22}}(q_i,\dot{q}_i)=0,\\
&G_{i_1}(q_i)\!=\!\frac{1}{l_{i_2}}gl_{i_2}^2m_{i_2}\cos(q_{1_i}+q_{2_i})\\&\quad\quad\quad\quad+
\frac{1}{l_{i_1}}(l_{i_2}^2m_{i_2}+l_{i_1}^2(m_{i_1}+m_{i_2})\\&\quad\quad\quad\quad-l_{i_2}^2m_{i_2})\cos(q_{1_i}),&\\
&G_{i_2}(q_i)\!=\!\frac{1}{l_{i_2}}gl_{i_2}^2m_{i_2}\cos(q_{1_i}+q_{2_i}).&
\end{aligned}
\end{equation*}
\begin{figure}[h!]
\begin{center}
\includegraphics[width=0.8\linewidth]{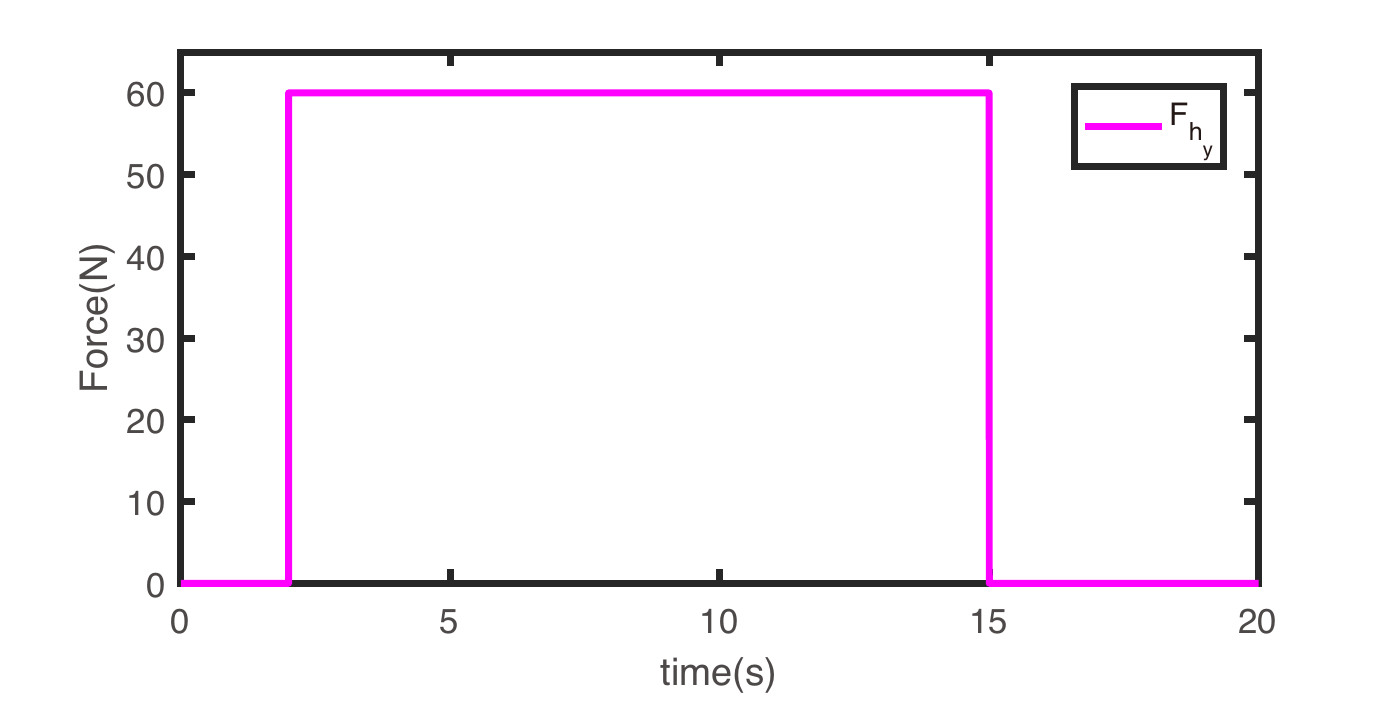}
\caption{The rectangle human input force}
\label{fig:fhy_free}
\end{center}
\end{figure}
The mass of the manipulators are chosen as $m_{m_1}=1.5$kg, $m_{m_2}=0.75$kg, $m_{s_1}=2.5$kg, $m_{s_2}=1.5$kg, the length of links for the master and the slave robots are $l_{m_1}=l_{s_1}=0.5$m, $l_{m_2}=l_{s_2}=0.3$m. The Jacobians of the master and slave robots are given by
\begin{align}&J_i(q_i)=\begin{bmatrix}\!-\!l_{i_1}\sin(q_{i_1})\!-\!l_{i_2}\sin(q_{i_1}\!+\!q_{i_2})& \!-\!l_{i_2}\sin(q_{i_1}\!+\!q_{i_2})\\l_{i_1}\cos(q_{i_1})\!+\!l_{i_2}\cos(q_{i_1}\!+\!q_{i_2})&l_{i_2}\cos(q_{i_1}\!+\!q_{i_2})
\end{bmatrix}.\end{align}

The following parameterization is proposed for both manipulators with $i=m, s$, respectively:
\begin{align*}
&Y_i(q_i,\dot{q}_i,\ddot{q}_i)\\&=\begin{bmatrix} \ddot{q}_{i_1}&Y_{12}&\ddot{q}_{i_2}&g\cos(q_{i_1}\!+\!q_{i_2})&g\cos(q_{i_1})\\0&Y_{22}&\ddot{q}_{i_1}\!+\!\ddot{q}_{i_2}&g\cos(q_{i_1}\!+\!q_{i_2})&0\end{bmatrix},\\
&Y_{12}\!=\!2\!\cos\!(q_{i_2})\!\ddot{q}_{i_1}\!+\!\cos\!(q_{i_2})\!\ddot{q}_{i_2}\!-\!2\!\sin\!(q_{i_2}\!)\dot{q}_{i_1}\dot{q}_{i_2}\!-\!\sin\!(q_{i_2}\!)\dot{q}_{i_2}^2,\\
&Y_{22}=\cos(q_{i_2})\ddot{q}_{i_1}+\sin(q_{i_2})\dot{q}_{i_1}^2,\\
&\theta_i=col\{\theta_{i_1},\theta_{i_2},\theta_{i_3},\theta_{i_4},\theta_{i_5}\},
\end{align*}
where $\theta_{i_1}=l_{i_2}^2m_{i_2}+l_{i_1}^2(m_{i_1}+m_{i_2})$, $\theta_{i_1}=l_{i_2}^2m_{i_2}+l_{i_1}^2(m_{i_1}+m_{i_2})$, $\theta_{i_3}=l_{i_2}^2m_{i_2}$, $\theta_{i_4}=l_{i_2}m_{i_2}$, $\theta_{i_5}=l_{i_1}(m_{i_1}+m_{i_2})$.

\begin{figure}
\centering
\subfigure[The first joint]{\includegraphics[width=0.8\linewidth]{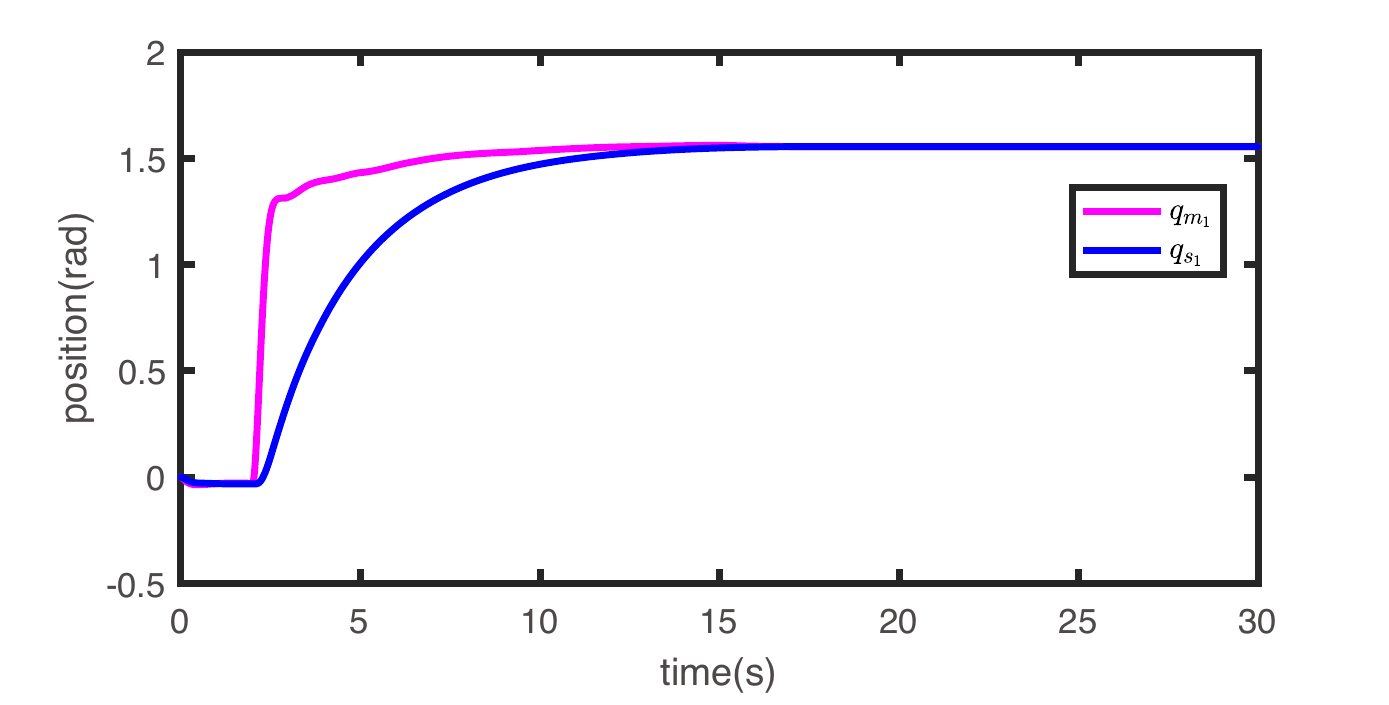}\label{fig:pos_link1_free}}
\subfigure[The second joint]{\includegraphics[width=0.8\linewidth]{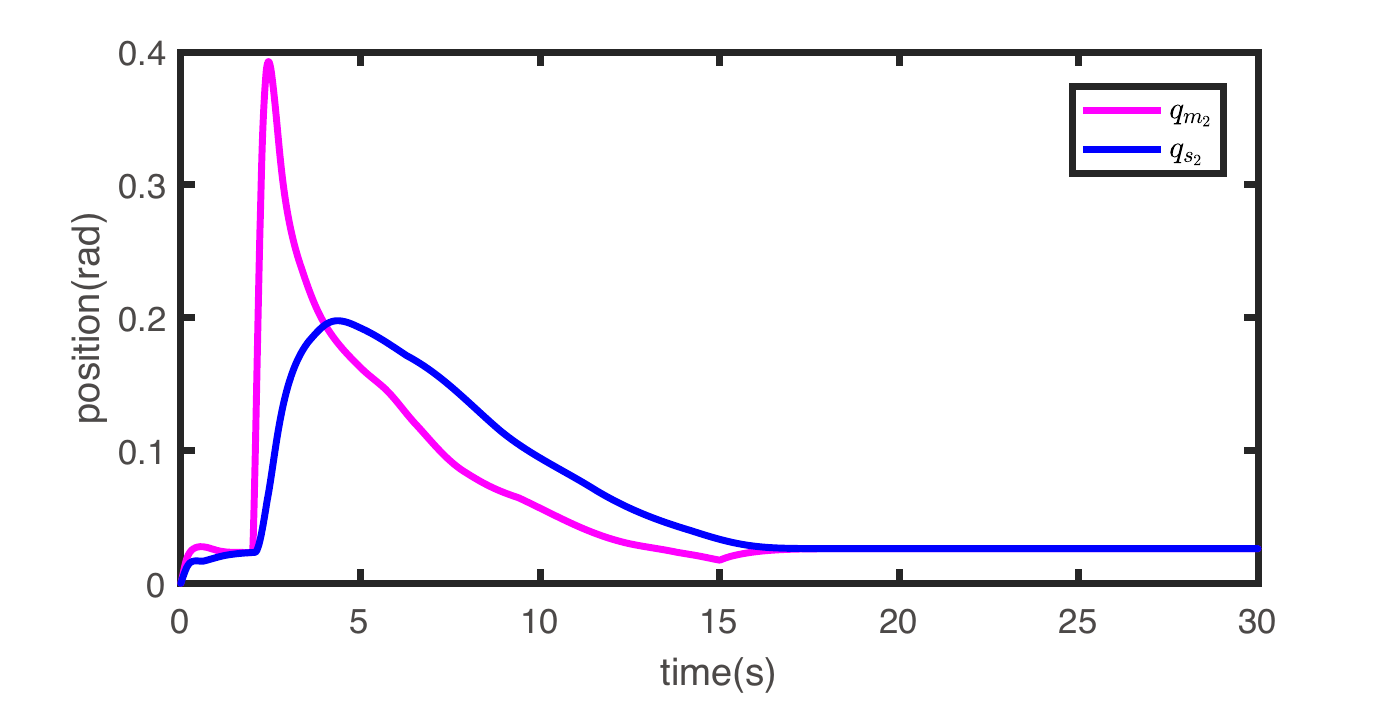}}
\caption{Scenario A: the master and slave robots joint positions}
\label{fig:pos_link_free}
\end{figure}

\begin{figure}[!h]
\centering
\subfigure[The first joint]{\includegraphics[width=0.8\linewidth]{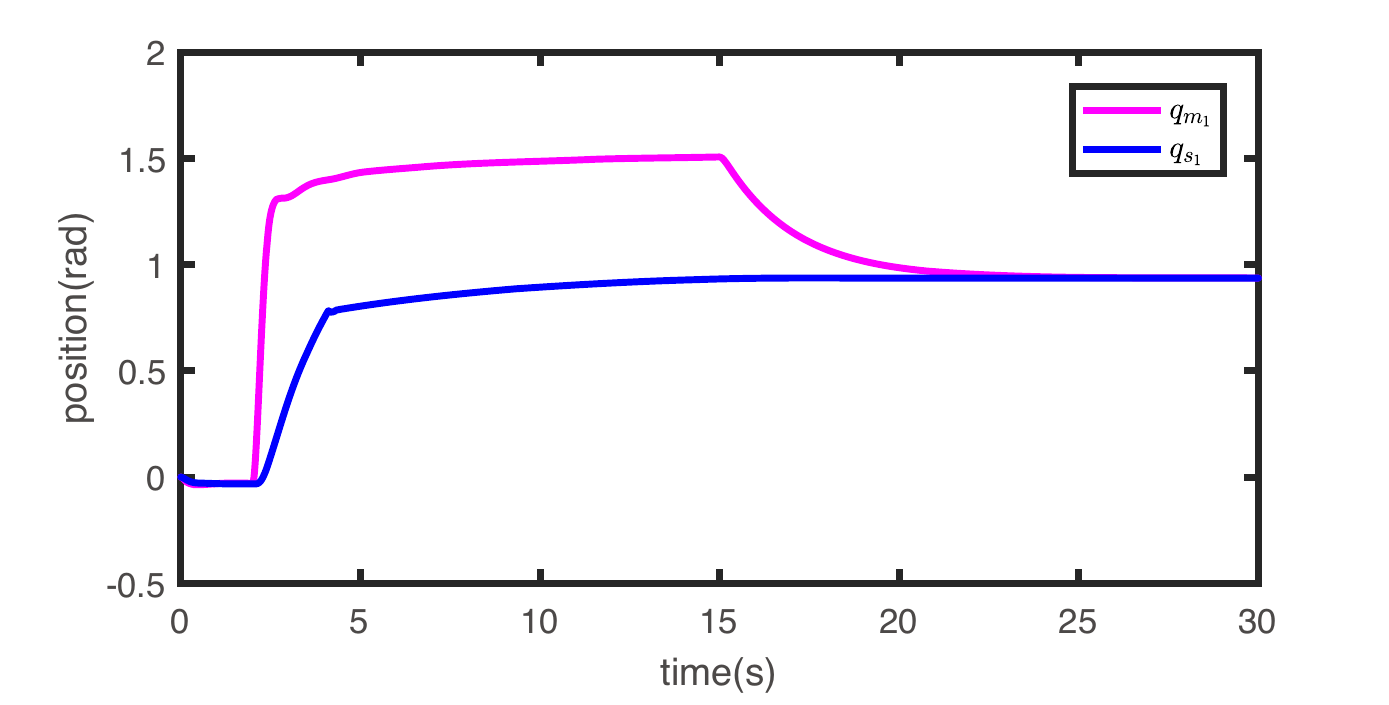}}
\subfigure[The second joint]{\includegraphics[width=0.8\linewidth]{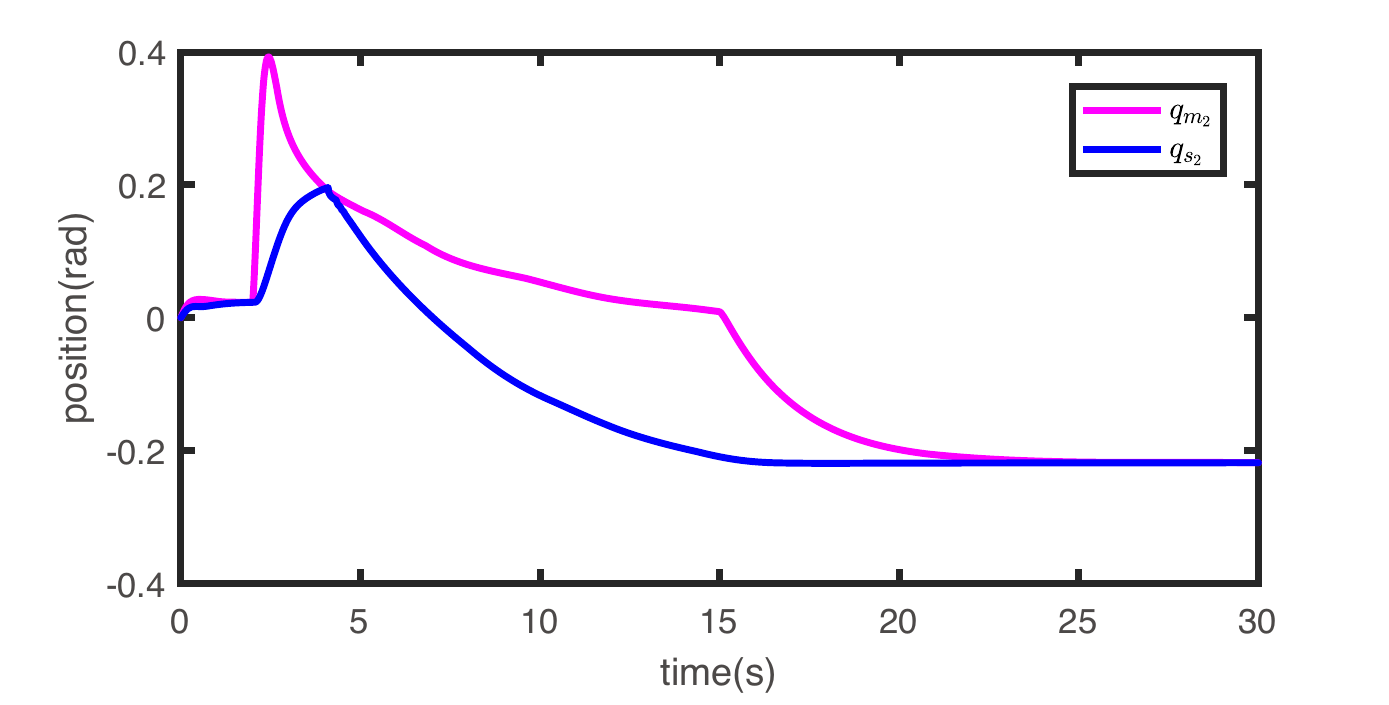}}
\caption{Senario B: the master and slave robots joint positions}
\label{fig:pos_link_contact}
\end{figure}

We assume that the operator hand force at the $Y-$direction is generated by a step signal depicted in Figure~\ref{fig:fhy_free},  while at the $X$-direction, there is no external force, then we have $F_h=[0,1]^TF_{h_y}$.

\begin{figure}[!h]
\centering
\subfigure[x]{\includegraphics[width=0.8\linewidth]{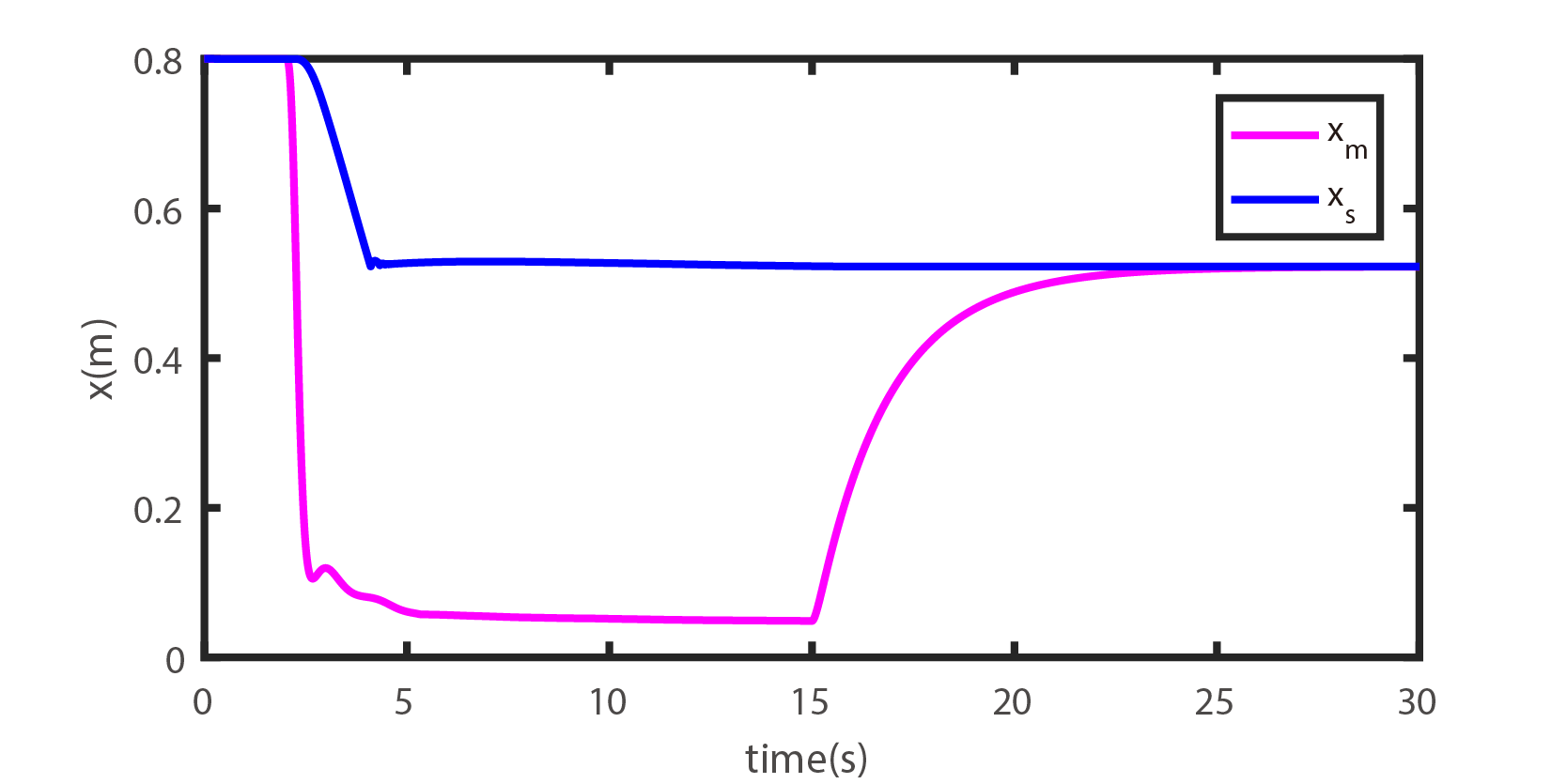}}
\subfigure[y]{\includegraphics[width=0.8\linewidth]{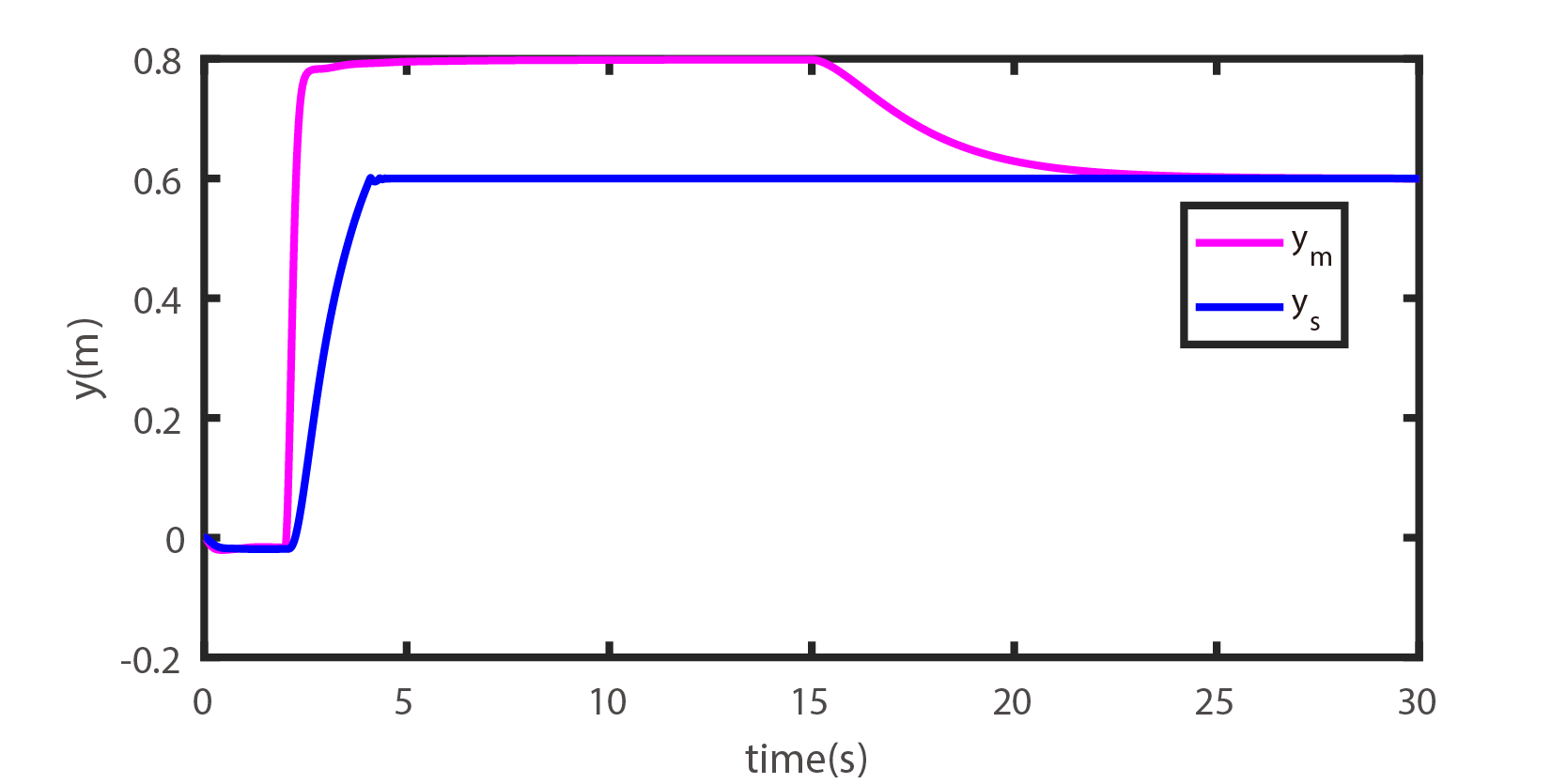}}
\caption{Senario B: The master and slave robots end-effector positions}
\label{fig:pos_xy_contact}
\end{figure}

In the slave side, two scenarios are considered. The first one which we called as Scenario A is that the slave manipulator is in free motion.  The second one which is named as Scenario B is that there is a wall at $y=0.6$ m in the environment. When the slave end-effector reaches the wall and moves further, the interaction force is $10,000(y-0.6)$ N. It can be seen that the wall is very stiff. In both scenarios, the master and the slave start with zero initial conditions, i.e., $q_m(0)=q_s(0)=[0,0]^T$ rad, $\dot{q}_m(0)=\dot{q}_s(0)=[0,0]^T $ rad/s. The initial values for the estimated dynamic parameters are chosen as $\hat{\theta}_m=[0.4, 0.1, 0.2, 0.32, 0.7]^T$, $\hat{\theta}_s=[0.7, 0.2, 0.3, 0.5, 1.7]^T$. The communication delays are set as: $T_m(t)=0.3+0.2\sin(2t)+0.1\sin(5t)$, $T_s(t)=0.8+0.3\sin(1.5t)+0.1\sin(5t)$.
\subsection{Stability Verification}

First of all, the slave is in free motion.  By applying the designed controller (\ref{eq:taum}), (\ref{eq:taus}), (\ref{eq:updatelaw_m}), (\ref{eq:updatelaw_z}), (\ref{eq:update_P}), (\ref{eq:update_lambda}) with $K_m=K_s=100I, \lambda_m=\lambda_s=0.5$, we obtain the simulation results as shown in Figure~\ref{fig:pos_link_free}. It can be seen that under the proposed controller, the presence of parametric uncertainties does not violate the stability of the bilateral teleoperation. The master and the slave achieve synchronization around the time $t=10$s.  Furthermore, the estimated dynamic parameters  are shown in  Figure~\ref{fig:theta_m_free_CAC}and  Figure~\ref{fig:theta_s_free_CAC}, respectively. By Figure~\ref{fig:theta_m_free_CAC}, it is easy to find that the estimate  $\hat{\theta}_m$ converges to its real value $\theta_m$ after $t=2$ s, when the external human force starts to be exerted to the master manipulator. Similarly, Figure \ref{fig:theta_s_free_CAC} reveals the convergence of the slave's dynamic parameters to their true values.

\begin{figure}[h!]
\centering
\includegraphics[width=0.8\linewidth]{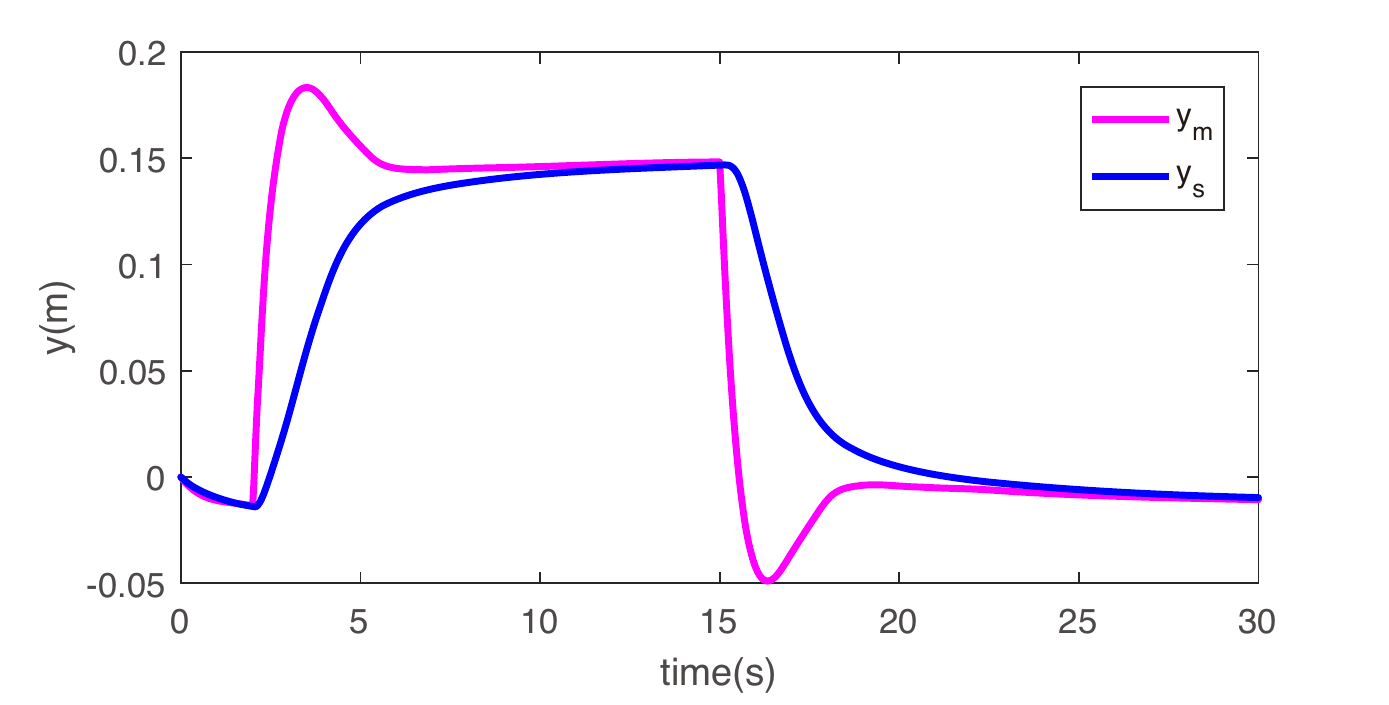}
\caption{Scenario A: The master and slave robots end-effector positions by using the method in \cite{SARRAS2014_JFI}}
\label{fig:noparaconverge_pos_y_free}
\end{figure}

Secondly, we show the tracking performance when the slave would contact a wall at $y=0.6$ m. From Figure~\ref{fig:pos_link_contact}-Figure~\ref{fig:pos_xy_contact}, it is easy to find that the system is stable even there exist dynamic parameter uncertainties. The master and the slave move forward when the external human force takes action and stop moving when the slave contact the wall at Y-direction.  The master's positions gradually converge to the slave's positions when the human operator stops exerting forces to the considered teleoperation system.

\begin{figure}[h!]
\centering
\subfigure[The proposed method]{\includegraphics[width=0.8\linewidth]{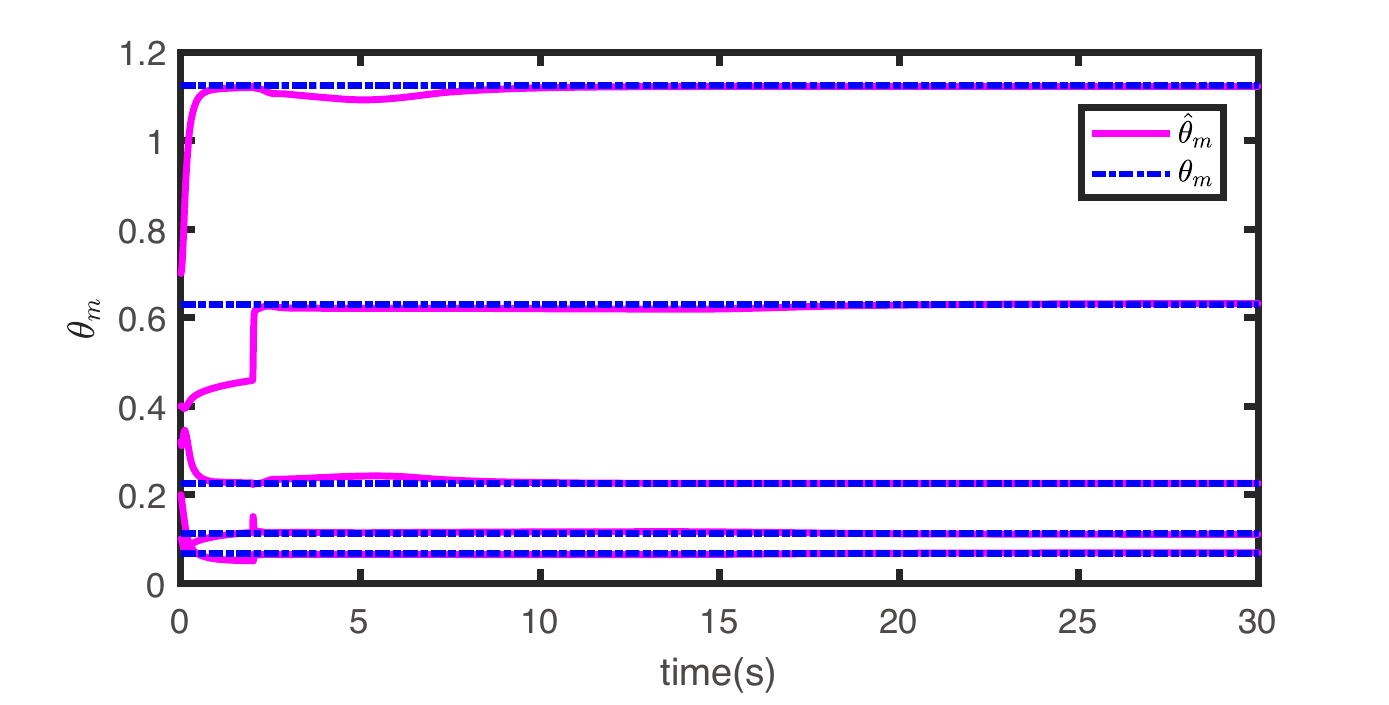}\label{fig:theta_m_free_CAC}}
\subfigure[The method in \cite{SARRAS2014_JFI}]{\includegraphics[width=0.8\linewidth]{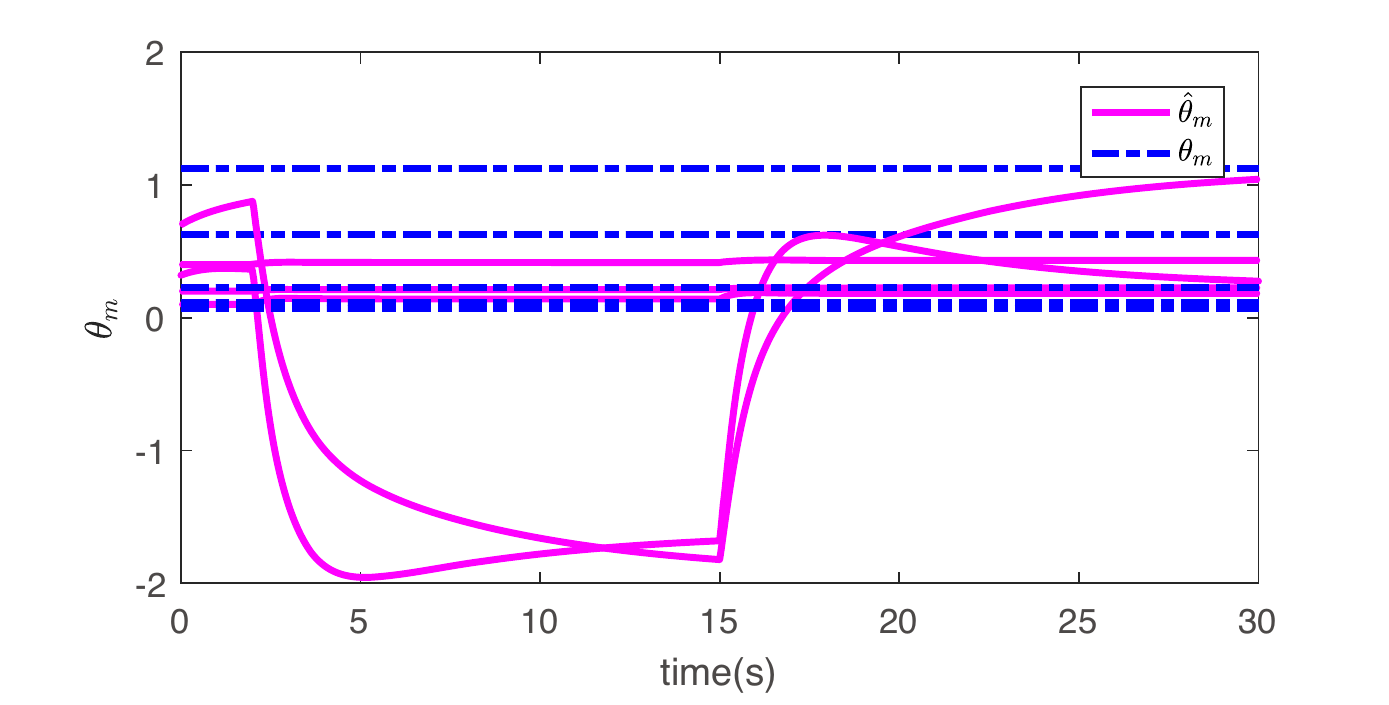}\label{fig:theta_m_free_17}}
\caption{Scenario A: The dynamic parameter $\theta_m$ and the parameter estimate $\hat{\theta}_m$}
\label{fig:theta_m_free}
\end{figure}

\begin{figure}[h!]
\centering
\subfigure[The proposed method]{\includegraphics[width=0.8\linewidth]{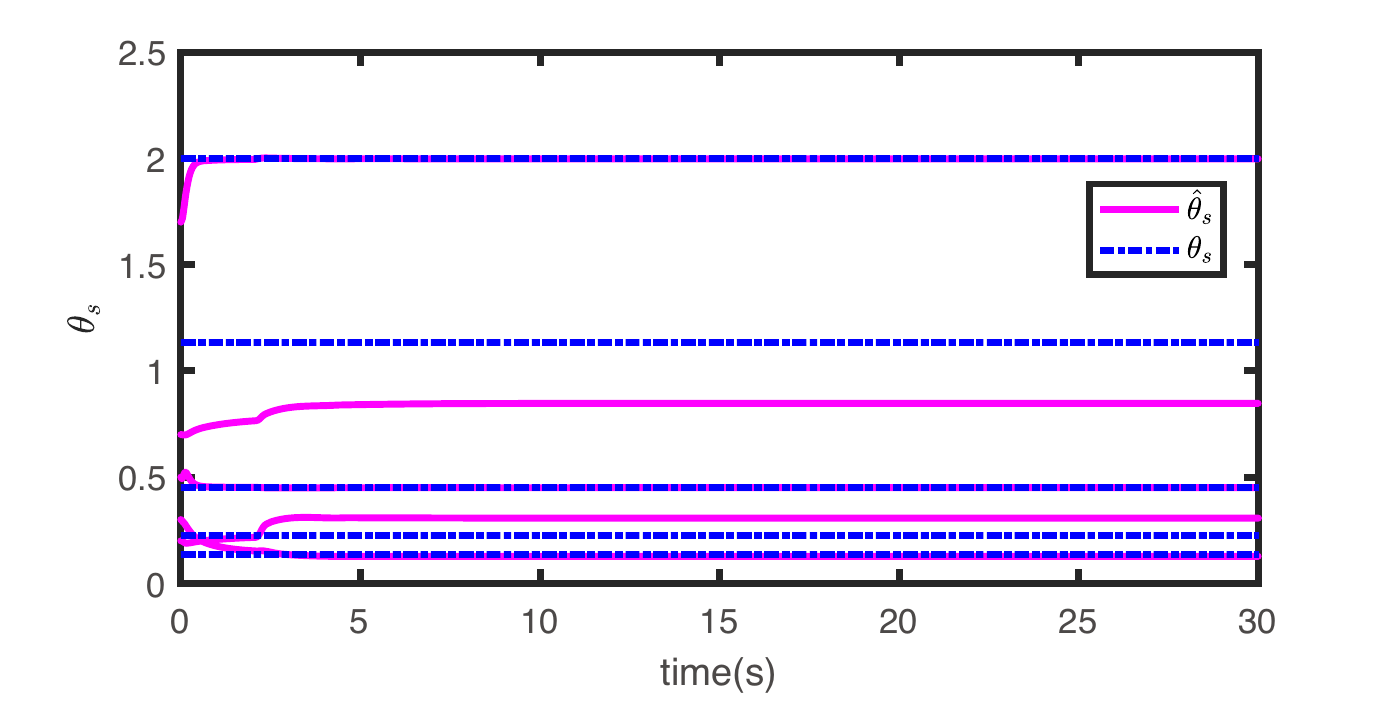}\label{fig:theta_s_free_CAC}}
\subfigure[The method in \cite{SARRAS2014_JFI}]{\includegraphics[width=0.8\linewidth]{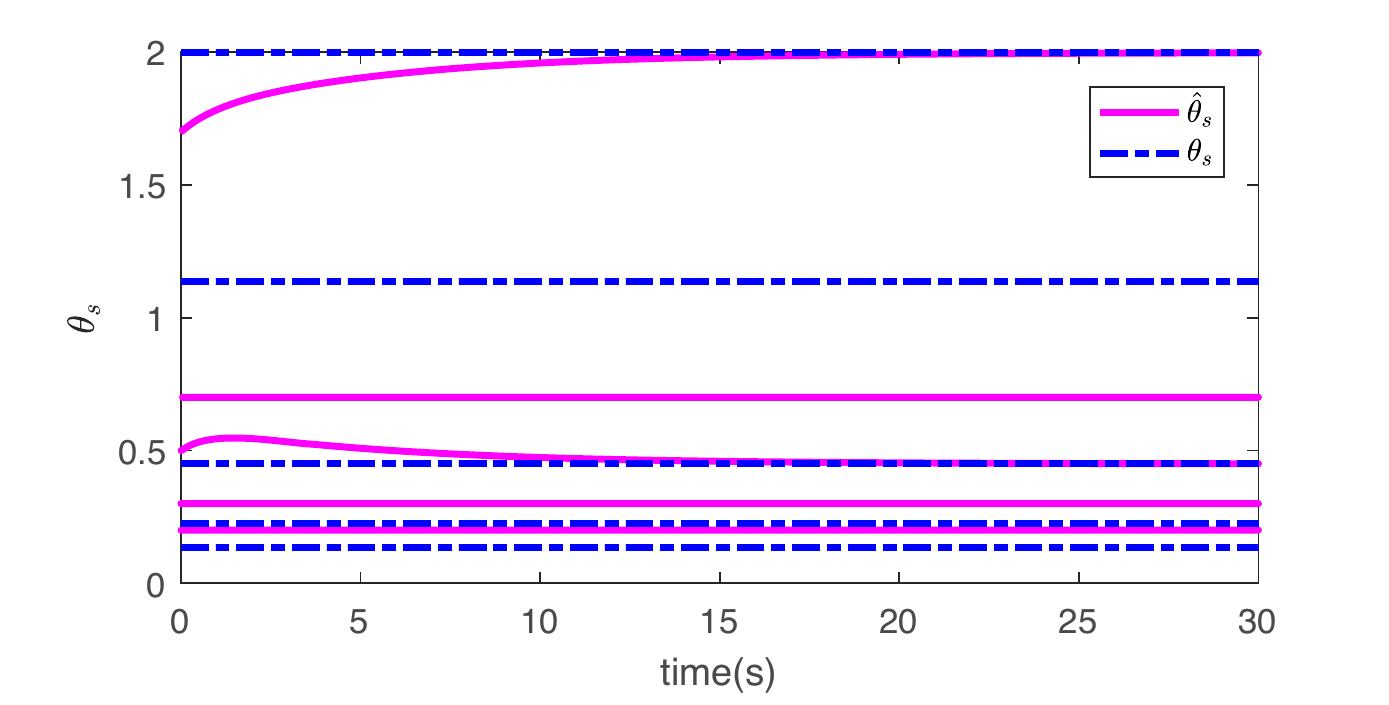}\label{fig:theta_s_free_17}}
\caption{Scenario A: The dynamic parameter $\theta_s$ and the parameter estimate $\hat{\theta}_s$}
\label{fig:theta_s_free}
\end{figure}

\subsection{Comparison study between this paper and \cite{SARRAS2014_JFI}}
In order to show the effectiveness of the proposed method, comparative simulation studies with the method in \cite{SARRAS2014_JFI} is given here. Both of the two methods can achieve asymptotic position tracking under time-varying time delays. However, the results in \cite{SARRAS2014_JFI} did not consider the parameter convergence. The dynamic parameter estimates in Scenario A under the proposed method in this paper and the method in \cite{SARRAS2014_JFI} are depicted in Figure~\ref{fig:theta_m_free} and Figure~\ref{fig:theta_s_free}.  It is shown that the master's dynamic parameters converge to the true values very quickly under the proposed method while in contrary the estimates in \cite{SARRAS2014_JFI} diverge from the true values almost all the time. The slave's dynamic estimates also have a better tracking performance with its true values under our proposed method from Figure ~\ref{fig:theta_s_free}.

\begin{figure}[h!]
\centering
\subfigure[The proposed method]{\includegraphics[width=0.8\linewidth]{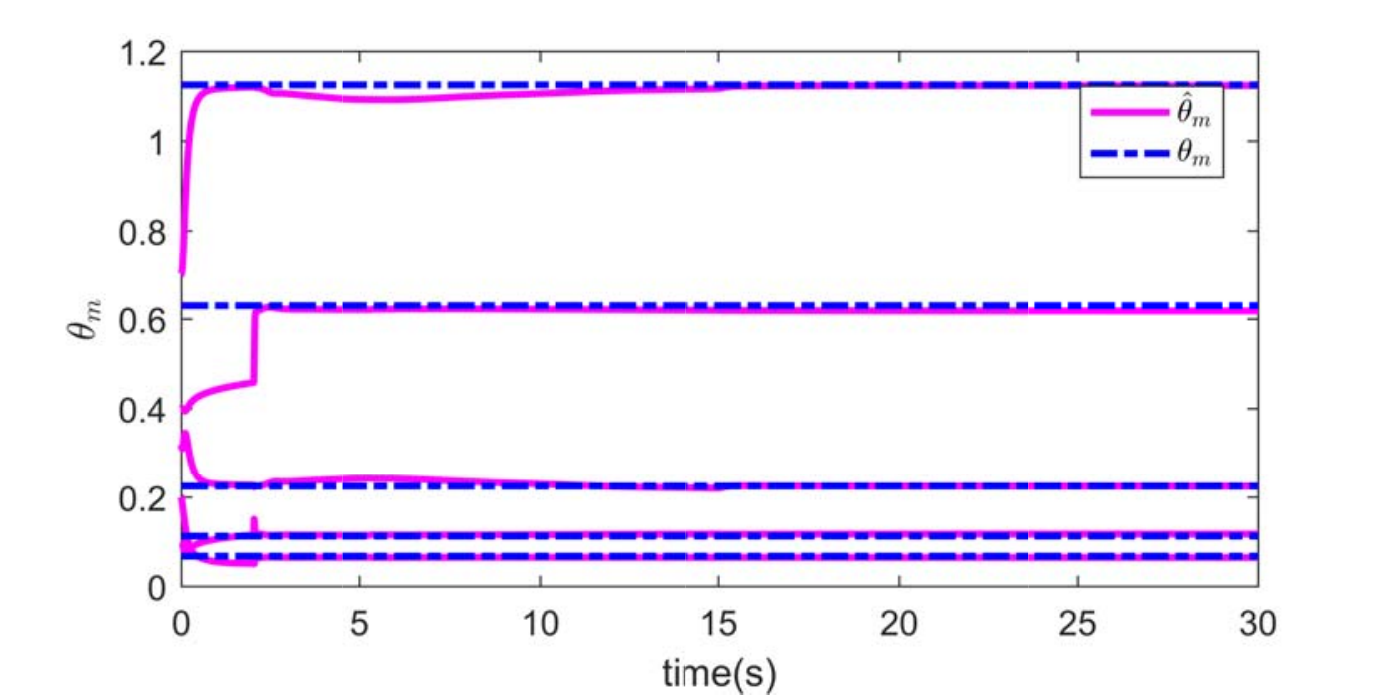}\label{fig:theta_m_contact_CAC}}
\subfigure[The method in \cite{SARRAS2014_JFI}]{\includegraphics[width=0.8\linewidth]{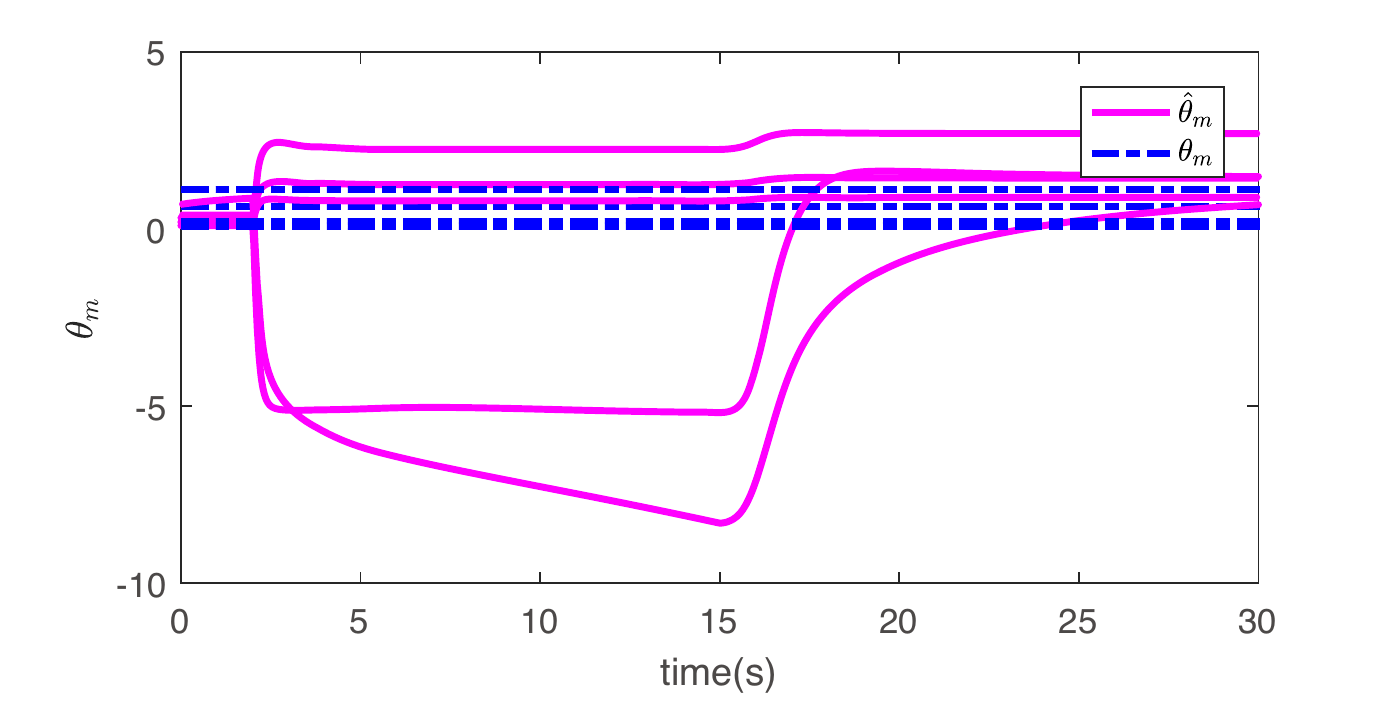}\label{fig:theta_m_contact_17}}
\caption{Scenario B: The dynamic parameter $\theta_m$ and the parameter estimate $\hat{\theta}_m$}
\label{fig:theta_m_contact}
\end{figure}

\begin{figure}[h!]
\centering
\subfigure[The proposed method]{\includegraphics[width=0.8\linewidth]{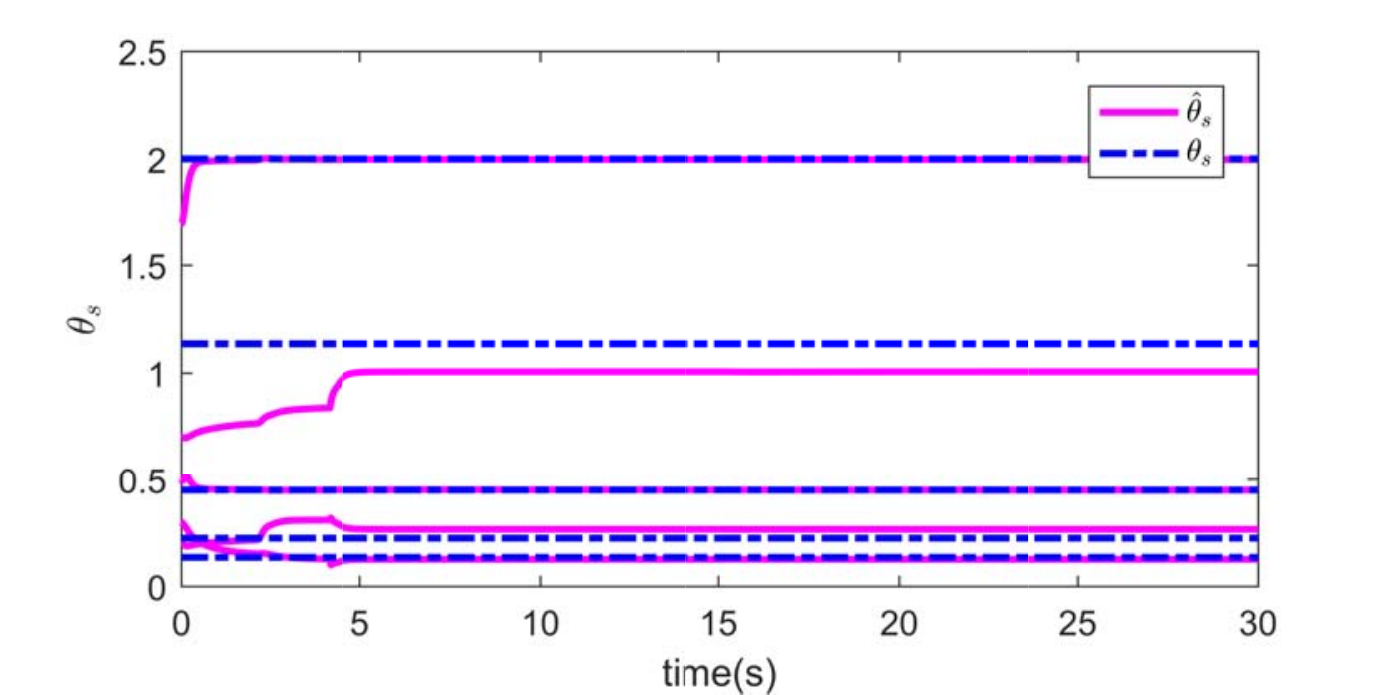}\label{fig:theta_s_contact_CAC}}
\subfigure[The method in \cite{SARRAS2014_JFI}]{\includegraphics[width=0.8\linewidth]{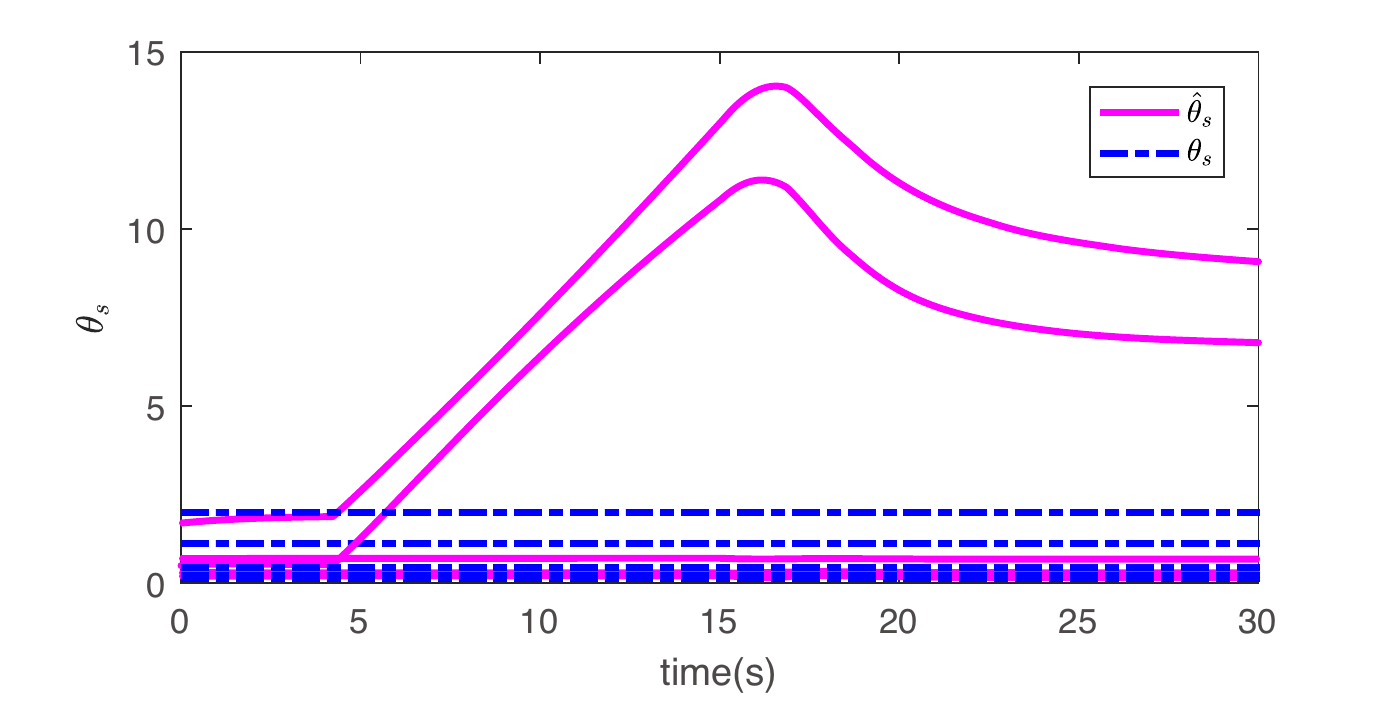}\label{fig:theta_s_contact_17}}
\caption{Scenario B: The dynamic parameter $\theta_s$ and the parameter estimate $\hat{\theta}_s$}
\label{fig:theta_s_contact}
\end{figure}

When the simulation was implemented in Scenario B, unfortunately, the controller proposed in \cite{SARRAS2014_JFI}  with the human input depicted in Figure \ref{fig:fhy_free} cannot drive the slave's end effector moving to the position $y=0.6$ m, as it is shown in Figure \ref{fig:noparaconverge_pos_y_free}, and thus the contact with the wall at $y=0.6$m could never happen in this circumstance. This means that the proposed method in \cite{SARRAS2014_JFI} needs larger force to drive the teleoperation system moving. To have a better comparison between the method in \cite{SARRAS2014_JFI} and the proposed method in this paper, we have to enlarge the human force for the proposed method in \cite{SARRAS2014_JFI}, which is given by $\bar{F}_h=10F_h$, where $\bar{F}_h$ is new human force exerted for the teleoperation system with the controller given in \cite{SARRAS2014_JFI}, while the other conditions keep the same.

The dynamic parameter estimates in Scenario B under the proposed method in this paper with the human input $F_h$ and the method in \cite{SARRAS2014_JFI} with the human input $\bar{F}_h$ are depicted in Figure~\ref{fig:theta_m_contact} and Figure~\ref{fig:theta_s_contact}.
By Figure ~\ref{fig:theta_m_free} and Figure ~\ref{fig:theta_m_contact}, the master's parameter estimates under the proposed algorithm in this paper converge to the true values very quickly in both Scenarios.   The slave's dynamic estimates also have a better tracking performance with its true values under our proposed method from Figure ~\ref{fig:theta_s_free} and Figure ~\ref{fig:theta_s_contact}.


To better describe the tracking performance, we use the proposed tracking performance measures  to evaluate the position tracking performance and the force tracking performance.  Table \ref{tab:Delta_J} shows the tracking performance of the teleoperation system under the proposed method in this paper and in \cite{SARRAS2014_JFI} by using our proposed performance measures. It is obviously that when the simulation is completed in Scenario A, that is, the slave is in free motion, and there is no environmental force, the position tracking performance under our proposed method is better than the one under the controller proposed in \cite{SARRAS2014_JFI}. When the simulation is implemented in Scenario B, both position tracking performance and force tracking performance could be measured by using the proposed performance measures. It is easy to find that in this case our method still has advantage over the prosed method in \cite{SARRAS2014_JFI}.
\begin{table}[h!]\centering
\caption{$\Delta_J$}
\begin{tabular}{ccc@{}ccc}
\hline
\multirow{2}{*}{$\Delta_p(\Delta_f)$} &  \multicolumn{2}{c}{The proposed method}&&\multicolumn{2}{c}{the method in \cite{SARRAS2014_JFI}}\\
\cline{2-3}\cline{5-6}
& Scenario A& Scenario B && Scenario A & Scenario B\\
\hline
$\Delta_p^1$& 3.0677& 7.5566&&  32.9066& 17.6223\\
\hline
$\Delta_p^2$& 7.8993& 80.3351&&  43.7317& 31.7531\\
\hline
$\Delta_f^1$& -& 16.3386&&  -& 20.9547\\
\hline
$\Delta_f^2$& -& 20.9843&&  -& 25.8604\\
\hline
\end{tabular}\label{tab:Delta_J}
\end{table}

%

%
%

\section{Conclusion}\label{sec:conclusion}
In this paper, a novel composite adaptive control framework that addressed dynamic uncertainties and time-varying delays for nonlinear teleoperation systems was  proposed.  A new defined prediction error was used to guarantee the parameter convergence. The stability criteria in terms of LMIs, which give the sufficient conditions related to the controller gains and  the upper bound of time delays, have been provided. To better describe the tracking performance, new tracking measures are proposed. The controller performance is verified via simulations. Further studies for parameter-converging adaptive control of teleoperation systems with both dynamic and kinematic uncertainties is under study and the results will be reported in the near future.


%


\section*{Acknowledgments}
This work was jointly supported by
the National Natural Science Foundation of China (No.~61333002,~61773053),  the Fundamental Research Funds for the Central Universities of USTB  (No.~FRF-TP-16-024A1,~FPR-BD-16-005A,~FRF-GF-17-A4),  the Beijing Key Discipline Development Program (No.~XK100080537), and the Beijing Natural Science Foundation (No.~4182039).

 \section*{References}
\bibliographystyle{elsarticle-num}
\bibliography{Sampling}

\begin{thebibliography}{10}
\expandafter\ifx\csname url\endcsname\relax
  \def\url#1{\texttt{#1}}\fi
\expandafter\ifx\csname urlprefix\endcsname\relax\def\urlprefix{URL }\fi
\expandafter\ifx\csname href\endcsname\relax
  \def\href#1#2{#2} \def\path#1{#1}\fi

\bibitem{Islam2011adaptive}
S.~Islam, P.~X. Liu, Robust adaptive fuzzy output feedback control system for
  robot manipulators, IEEE/ASME Transactions on Mechatronics 16~(2) (2011)
  288--296.

\bibitem{li2014decentralised}
Z.~Li, S.~Deng, C.~Su, G.~Li, Z.~Yu, Y.~Liu, M.~Wang, Decentralised adaptive
  control of cooperating robotic manipulators with disturbance observers, Iet
  Control Theory and Applications 8~(7) (2014) 515--521.

\bibitem{dehghan2015adaptive}
S.~A.~M. Dehghan, M.~Danesh, F.~Sheikholeslam, Adaptive hybrid force/position
  control of robot manipulators using an adaptive force estimator in the
  presence of parametric uncertainty, Advanced Robotics 29~(4) (2015) 209--223.

\bibitem{Li2017adaptive}
Z.~Li, J.~Liu, Z.~Huang, Y.~Peng, H.~Pu, L.~Ding, Adaptive impedance control of
  human-robot cooperation using reinforcement learning, IEEE Transactions on
  Industrial Electronics 64~(10) (2017) 8013--8022.

\bibitem{Pan2016TAC}
Y.~Pan, H.~Yu, Composite learning from adaptive dynamic surface control, IEEE
  Transactions on Automatic Control 61~(9) (2016) 2603--2609.

\bibitem{pan2016ccdc}
Y.~Pan, T.~Sun, L.~Pan, H.~Yu, Robustness analysis of composite adaptive robot
  control, in: Control and Decision Conference (CCDC), 2016 Chinese, pp.
  510--515.

\bibitem{liu2017stability}
K.~Liu, A.~Seuret, Y.~Xia, Stability analysis of systems with time-varying
  delays via the second-order {Bessel--Legendre} inequality, Automatica 76
  (2017) 138--142.

\bibitem{liu2016generalized}
K.~Liu, E.~Fridman, K.~H. Johansson, Y.~Xia, Generalized {Jensen} inequalities
  with application to stability analysis of systems with distributed delays
  over infinite time-horizons, Automatica 69 (2016) 222--231.

\bibitem{Liu2017comparison}
K.~Liu, A.~Seuret, Comparison of bounding methods for stability analysis of
  systems with time-varying delays, Journal of the Franklin Institute 354~(7)
  (2017) 2979 -- 2993.

\bibitem{yan2017}
J.~Yan, X.~Yang, X.~Luo, C.~Chen, X.~Guan, Consensus of teleoperating
  cyber-physical system via centralized and decentralized controllers, IEEE
  Access 5 (2017) 17271--17287.

\bibitem{Spong1989_classical}
R.~J. Anderson, M.~W. Spong, Bilateral control of teleoperators with time
  delay, IEEE Transactions on Automatic Control 34~(5) (1989) 494--501.

\bibitem{Hua2010_LMI}
C.-C. Hua, X.~P. Liu, Delay-dependent stability criteria of teleoperation
  systems with asymmetric time-varying delays, IEEE Transactions on Robotics
  26~(5) (2010) 925--932.

\bibitem{li2015guaranteed}
Y.~Li, R.~Johansson, K.~Liu, Y.~Yin, Guaranteed cost control design for delayed
  teleoperation systems, Journal of the Franklin Institute 352~(11) (2015)
  5085--5105.

\bibitem{yang2015_syn}
Y.~Yang, C.~Hua, X.~Guan, Synchronization control for bilateral teleoperation
  system with prescribed performance under asymmetric time delay, Nonlinear
  Dynamics 81~(1) (2015) 481--493.

\bibitem{SARRAS2014_JFI}
I.~Sarras, E.~Nu\~no, L.~Basa\~nez, An adaptive controller for nonlinear
  teleoperators with variable time-delays, Journal of the Franklin Institute
  351~(10) (2014) 4817--4837.

\bibitem{zhai2015_IJRNC}
D.-H. Zhai, Y.~Xia, Adaptive finite-time control for nonlinear teleoperation
  systems with asymmetric time-varying delays, International Journal of Robust
  and Nonlinear Control 26~(12) (2015) 2586--2607.

\bibitem{zhai2016_inputsaturation}
D.-H. Zhai, Y.~Xia, Adaptive control for teleoperation system with varying time
  delays and input saturation constraints, IEEE Transactions on Industrial
  Electronics 63~(11) (2016) 6921--6929.

\bibitem{kang2013_stodelay}
Y.~Kang, Z.~Li, X.~Cao, D.-H. Zhai, Robust control of motion/force for robotic
  manipulators with random time delays, IEEE Transactions on Control Systems
  Technology 21~(5) (2013) 1708--1718.

\bibitem{li2013_stodelay}
Z.~Li, Y.~Xia, Adaptive neural network control of bilateral teleoperation with
  unsymmetrical stochastic delays and unmodeled dynamics, International Journal
  of Robust and Nonlinear Control 24~(11) (2014) 1628--1652.

\bibitem{li2016_unknownconstraints}
Z.~Li, Y.~Xia, D.~Wang, D.~H. Zhai, C.~Y. Su, X.~Zhao, Neural network-based
  control of networked trilateral teleoperation with geometrically unknown
  constraints, IEEE Transactions on Cybernetics 46~(5) (2016) 1051--1064.

\bibitem{chopra2008brief}
N.~Chopra, M.~W. Spong, R.~Lozano, Synchronization of bilateral teleoperators
  with time delay, Automatica 44~(8) (2008) 2142--2148.

\bibitem{Nuno2010improvedsyn}
E.~Nu\~no, R.~Ortega, L.~Basa\~nez, An adaptive controller for nonlinear
  teleoperators, Automatica 46~(1) (2010) 155--159.

\bibitem{kim2013apt}
B.~Kim, H.~Ahn, A design of bilateral teleoperation systems using composite
  adaptive controller, Control Engineering Practice 21~(12) (2013) 1641--1652.

\bibitem{chen2016adaptive}
Z.~Chen, B.~Liang, T.~Zhang, X.~Wang, B.~Zhang, Adaptive bilateral control for
  nonlinear uncertain teleoperation with guaranteed transient performance,
  Robotica 34~(10) (2016) 2205--2222.

\bibitem{Nuno2011_tutorial}
E.~Nu\~no, L.~Basa\~nez, R.~Ortega, Passivity-based control for bilateral
  teleoperation: A tutorial, Automatica 47~(3)  485--495.

\bibitem{Spong2006_book}
M.~W. Spong, S.~Hutchinson, M.~Vidyasagar, Robot modeling and control, Vol.~3,
  Wiley New York, 2006.

\bibitem{sarras2014adaptive}
I.~Sarras, E.~Nu{\~n}o, L.~Basa{\~n}ez, An adaptive controller for nonlinear
  teleoperators with variable time-delays, Journal of the Franklin Institute
  351~(10) (2014) 4817--4837.

\bibitem{slotine1991applied}
J.-J.~E. Slotine, W.~Li, et~al., Applied nonlinear control, Vol. 199,
  prentice-Hall Englewood Cliffs, NJ, 1991.

\bibitem{kim2013design}
B.-Y. Kim, H.-S. Ahn, A design of bilateral teleoperation systems using
  composite adaptive controller, Control Engineering Practice 21~(12) (2013)
  1641--1652.

\end{thebibliography}

\label{lastpage}
\end{document}